\begin{document}

\title{A simpler and more efficient algorithm for the next-to-shortest path problem
}


\author{Bang Ye Wu}


\institute{National Chung Cheng University, ChiaYi, Taiwan 621,
R.O.C.\\
\email{bangye@cs.ccu.edu.tw}
}

\date{Received: date / Accepted: date}

\maketitle

\begin{abstract}
Given an undirected graph $G=(V,E)$ with positive edge lengths and two vertices $s$ and $t$, the next-to-shortest path problem is to find an $st$-path which length is minimum amongst all $st$-paths strictly longer than the shortest path length. In this paper we show that the problem can be solved in linear time if the distances from $s$ and $t$ to all other vertices are given. Particularly our new algorithm runs in $O(|V|\log |V|+|E|)$ time for general graphs, which improves the previous result of $O(|V|^2)$ time for sparse graphs, and takes only linear time for unweighted graphs, planar graphs, and graphs with positive integer edge lengths.  

\keywords{algorithm\and shortest path\and time complexity\and next-to-shortest path}
\end{abstract}

\section{Introduction}

Let $G=(V,E,w)$ be an undirected graph, in which $w$ is a positive edge length function.
For $s,t\in V$, an $st$-path is a simple path from $s$ to $t$, in which ``simple'' means there is no repeated vertex in the path. In this paper, a path always means a simple path.
The length of a path is the total length of all edges in the path.
An $st$-path is a shortest $st$-path if its length is minimum amongst all possible $st$-paths.
The shortest path length from $s$ to $t$ is denoted by $d(s,t)$ which is the length of their shortest path.
A \emph{next-to-shortest} $st$-path is an $st$-path which length is minimum amongst those the path lengths \emph{strictly larger} than $d(s,t)$. And the next-to-shortest path problem is to find a next-to-shortest $st$-path for given $G$, $s$ and $t$. 

While the shortest path problem has been widely studied and efficient algorithms have been proposed, the next-to-shortest path problem attracts researchers just in the last decade.
The problem was first studied by Lalgudi and Papaefthymiou in the directed version with no restriction to positive edge weight \cite{lal97}. They showed that the problem is intractable for path and can be efficiently solved for walk (allowing repeated vertices).
Algorithms for the problem on special graphs were also studied \cite{bar07,mod06}.
The first polynomial algorithm for undirected positive version, i.e., the next-to-shortest path defined in this paper,
was developed by Krasikov and Noble, and their algorithm takes $O(n^3m)$ time \cite{kra04}, in which $n$ and $m$ are the number of vertices and edges, respectively.
The time complexity was then reduced to $O(n^3)$ by Li et al. \cite{li06}.
Recently, Kao et al. further improved the time complexity to $O(n^2)$ \cite{kao10}.
In this paper, we show that the problem can be solved in linear time if the distances from $s$ and $t$ to all other vertices are given.

Let $D$ be the union of all shortest $st$-paths. 
For convenience let $D^+$ be the digraph obtained from $D$ by orientating all edges toward $t$.
Apparently $s$ and $t$ are in $V(D^+)$ and, for any $x,y\in V(D^+)$, any (directed) $xy$-path in $D^+$ is a shortest $xy$-path (undirected) in $G$.
An outward subpath of an $st$-path is a path consisting of edges in $E-E(D)$ and the both endpoints are in $V(D)$; and a backward subpath is a maximal subpath using edges in $E(D^+)$ but with reverse direction.
Since a next-to-shortest path either contains an edge in $E-E(D)$ or not, we divide the problem into two subproblems, and the better of the solutions of the two subproblems is the optimal path.
The first subproblem looks for a shortest path using at least one edge not in $E(D)$, and the second subproblem looks for a shortest path consisting of only edges in $E(D)$ but with length larger than $d(s,t)$. Following the previous names in \cite{kao10}, we name the optimal paths of the first and the second subproblems as ``optimal outward path'' (optimal path with an outward subpath) and ``optimal backward path'' (optimal path with a backward subpath), respectively. 
Since any $st$-path in $D^+$ has length $d(s,t)$, the optimal backward path uses at least one edge with reverse direction. By the optimality the following result was shown in \cite{kao10}
and it is the basis of the algorithms in the previous and this papers.
\begin{lemma}\label{c1}
The optimal outward path contains exactly one outward subpath and no backward subpath. The optimal backward path contains exactly one backward subpath.
\end{lemma}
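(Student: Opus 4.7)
The plan is to prove both assertions by an exchange argument: starting from an arbitrary optimal path, I would locally restructure it into one of equal length with the claimed shape. Any violation of the structural claim in an optimal path will then yield a strictly shorter replacement, contradicting optimality.

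For the first assertion, take an optimal outward path $P$ and any outward subpath $Q$ in $P$ with endpoints $u,v \in V(D)$. Writing $P = P[s,u]\cdot Q\cdot P[v,t]$, I would form the candidate $P' = A\cdot Q\cdot B$, where $A$ and $B$ are shortest $su$- and $vt$-paths chosen inside $D^{+}$. Then
\[
w(P') \;=\; d(s,u)+w(Q)+d(v,t) \;\le\; w(P[s,u])+w(Q)+w(P[v,t]) \;=\; w(P),
\]
and $w(P') > d(s,t)$, since otherwise $A\cdot Q\cdot B$ would be a shortest $st$-walk and could not contain the edge of $Q$ lying in $E\setminus E(D)$. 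If $P'$ is simple, it is an outward path whose only outward subpath is $Q$ and which has no backward subpath (because $A$ and $B$ are traversed forward in $D^{+}$); by the optimality of $P$ its length equals $w(P)$, giving the claim.

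For the second assertion, any optimal backward path $P$ uses only edges in $E(D)$, and telescoping $d(s,\cdot)$ along $P$ gives the identity $w(P) = d(s,t)+2B$, where $B$ is the total backward length of $P$. Suppose $P$ contained two successive backward subpaths $S_1, S_2$ with endpoints $(x_1,y_1)$ and $(x_2,y_2)$ separated by a forward segment $P[y_1,x_2]$; I would then replace $P[s,y_1]$ and $P[y_2,t]$ by shortest $D^{+}$-paths to $y_1$ and from $y_2$, keeping $P[y_1,x_2]\cdot S_2$ in the middle. A direct computation shows the resulting walk has length $d(s,t)+2\,w(S_2)$, strictly less than $w(P)$ by $2\,w(S_1)$ and still exceeding $d(s,t)$; if simple, this contradicts optimality.

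The hardest part will be establishing the simplicity of the reconstructed paths, which I plan to handle uniformly using the fact that $d(s,\cdot)$ strictly increases along each edge of $D^{+}$. Vertices on $A$ satisfy $d(s,\cdot)\le d(s,u)$ while vertices on $B$ satisfy $d(s,\cdot)\ge d(s,v)$, so $A$ and $B$ are automatically vertex-disjoint whenever $d(s,u)<d(s,v)$. In the remaining configurations a shared vertex $z$ of $A$ and $B$ furnishes a shortcut walk $A[s,z]\cdot B[z,t]$ that is itself a shortest $st$-path; iteratively shortcutting at such collisions, or re-picking the outward subpath closer to the collision, strictly decreases the length at each step and so terminates, yielding either a simple witness of the claimed structure or a strictly shorter outward (respectively, backward) path that contradicts optimality.
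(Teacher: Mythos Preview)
Your exchange argument is precisely the idea the paper uses, but note that the paper does \emph{not} actually prove this lemma: it attributes the result to \cite{kao10} and offers only the single sentence ``If there are two non-consecutive backward or outward subpaths, we can replace one of them with a subpath in $D^+$ to obtain a better one.'' Your write-up already contains more detail than the paper provides, and the length computations (in particular the telescoping identity $w(P)=d(s,t)+2B$) are correct.

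The part of your plan that is not yet a proof is the final paragraph on simplicity. Your observation that $A$ and $B$ are disjoint whenever $d_s(u)<d_s(v)$ is fine, but the fallback---``iteratively shortcutting at such collisions, or re-picking the outward subpath closer to the collision, strictly decreases the length at each step''---does not stand on its own. A shortcut $A[s,z]\circ B[z,t]$ through a shared vertex $z$ is a \emph{shortest} $st$-path, not a shorter outward (or backward) competitor, so it does not by itself contradict optimality of $P$; and ``re-picking the outward subpath closer to the collision'' is not defined. You also need to say why the internal vertices of $Q$ avoid $A$ and $B$ (the paper's definition of an outward subpath does not force internal vertices out of $V(D)$; choosing $Q$ minimal fixes this, but you should say so). The full simplicity argument genuinely requires a case analysis---compare the work the paper does later in Lemma~\ref{outopt} for a closely related construction---rather than an unspecified iteration.
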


The reason for the two observations is the same: If there are two non-consecutive backward or outward subpaths, we can replace one of them with a subpath in $D^+$ to obtain a better one. 
Due to \cite{kao10}, the optimal outward path subproblem can be solved in $O(m+n\log n)$ time. 
But, for the optimal backward subproblem, they only gave an $O(n^2)$ time algorithm.
The contribution of this paper is as follows.
\begin{itemize}
\item We give an $O(m+n\log n)$ time algorithm for the optimal backward subproblem, which also reduces the total time complexity of the whole algorithm for sparse graphs. 
\item We give an algorithm for finding an optimal outward path. The time complexity is the same as Kao's algorithm for general graphs but the new algorithm is simpler and avoids the sorting step in Kao's algorithm.  
\item More precisely, if the distances from $s$ and $t$ to all other vertices are given, both our new algorithms take only linear time.
That is, for graphs on which the \emph{single source shortest paths} (SSSP) problem can be solved in $t(m,n)$ time, the next-to-shortest path problem can be solved in $O(t(m,n)+m+n)$ time. Consequently the next-to-shortest path problem can be solved in linear time for undirected unweighted graph, for undirected planar graph with positive edge weights, and for undirected graph with positive integral edge weights.
\end{itemize}

The remaining sections are organized as follows. We introduce notations and derive some basic properties in Section 2. The algorithm for the optimal backward path is shown in Section 3, and in Section 4, we give a simpler algorithm for the outward path problem. Finally concluding remarks are given in Section 5. 
 
\section{Preliminaries}

Throughout this paper, we shall assume that $(G,s,t)$ is the instance of the problem, in which $G=(V,E,w)$ is the input graph with vertex set $V$, edge set $E$ and edge length $w$. 
$s$ and $t$ are two vertices in $V$. The graph $G$ is simple, connected and undirected, and the edge lengths are all positive. We shall use $n=|V|$ and $m=|E|$.

For a graph $H$, $V(H)$ and $E(H)$ denote the vertex and edge sets, respectively. 
For two vertices $x$ and $y$ on a path $P$, let $P[x,y]$ denote the subpath from $x$ to $y$ and $w(P)=\sum_{e\in E(P)}w(e)$ denote the length of the path.
Let $d(x,y)$ denote the shortest path length from $x$ to $y$ in $G$, which is also called as the distance from $x$ to $y$. 
For convenience, let $d_s(v)=d(s,v)$ and $d_t(v)=d(v,t)$. 

To show the time complexities more precisely, we shall assume the distances from $s$ and $t$ to all other vertices are given.
These distances can be found by solving the single source shortest paths (SSSP) problem. For general undirected and positive weight graphs (the most general setting of the problem discussed in this paper), the SSSP problem can be solved in $O(m+n\log n)$ time \cite{cor01,dijk}, and more efficient algorithms exist for special graphs or graphs with restrictions on edge lengths.
 
As defined in the introduction, let $D$ be the union of all shortest $st$-paths and $D^+$ is obtained from $D$ by orientating all edges toward $t$. Constructing $D$ and $D^+$ can be done in linear time as follows. A vertex $v$ is in $V(D)$ iff $d_s(v)+d_t(v)=d(s,t)$, and, for both $u$ and $v$ in $V(D)$, a directed edge $(u,v)\in E(D^+)$ iff $d_s(v)=d_s(u)+w(u,v)$. Similarly a shortest path tree rooted at $s$ can also be constructed in linear time if the distances from $s$ to all others are given.

Since all edge lengths are positive, $D^+$ is a directed acyclic graph (dag) and we may use the terms such as parent, child, ancestor and descendant as in a rooted tree. Also, for convenience, we abuse the notation $d(x,y)$ for the distance from $x$ to $y$ in $D$ and $D^+$  as long as $x$ is an ancestor of $y$.   
We shall use \emph{immediate dominators} in our algorithm. 
A vertex $v\in V(D^+)$ is an $s$-dominator of another vertex $x$ iff all paths from $s$ to $x$ contain $v$. 
An $s$-dominator $v$ is an $s$-immediate-dominator of $x$, denoted by $I_s(x)$, if it is the one closest to $x$, i.e., any other $s$-dominator of $x$ is an $s$-dominator of $v$.
We remind that, for any vertex $x\in V(D^+)$, there exist a path from $s$ to $x$ and also a path from $x$ to $t$.
Apparently any vertex has a unique $s$-immediate-dominator and all $s$-dominators, as well as the $s$-immediate-dominator, are ancestors of the vertex.
Similarly we define the $t$-dominator, i.e., $v$ is a $t$-dominator of $x$ iff any $xt$-path contains $v$, and $I_t(x)$ is the $t$-dominator closest to $x$.
Note that $s$ is an $s$-dominator and $t$ is a $t$-dominator for any other vertex in $V(D)$.  

Finding immediate dominator is one of the most fundamental problems in the area of global flow analysis and 
program optimization. The first algorithm for the problem was proposed in 1969, and then had
been improved several times. A linear time algorithm for finding the immediate dominator for each vertex was given in \cite{als99}. 

We define a binary relation on $V(D^+)$: $x\prec y$ iff $x$ is an ancestor of $y$. In our definition, a vertex is not an ancestor of itself. Also define $x\preceq y$ iff $x$ is an ancestor of $y$ or $x=y$.
We derive some properties used in this paper.

\begin{lemma}\label{idom1}
For any vertices $x$ and $y$ in $D^+$ and $y\prec x$, either $I_s(x)\preceq y$ or $y\prec I_s(x)$.
Similarly either $I_t(y)\preceq x$ or $x\prec I_t(y)$.
\end{lemma}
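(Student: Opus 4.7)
The plan is to exploit the defining property of immediate dominators directly: I will manufacture a single simple $sx$-path on which both $y$ and $I_s(x)$ must appear, and then read off the desired dichotomy from the order in which they appear along that path.

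First I would build such a path. Since $y\in V(D^+)$, there is a simple directed $sy$-path $P_1$ in $D^+$, and since $y\prec x$ there is a simple directed $yx$-path $P_2$. I would then verify that the concatenation $P:=P_1\cdot P_2$ is itself a simple path: any vertex $v\ne y$ lying on both $P_1$ and $P_2$ would simultaneously be an ancestor and a descendant of $y$, contradicting the acyclicity of $D^+$. So $P$ is a simple $sx$-path passing through $y$.

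Next, because $I_s(x)$ is an $s$-dominator of $x$, it must appear on every $sx$-path, and in particular on $P$. Thus $y$ and $I_s(x)$ are two (possibly coincident) vertices of the simple path $P$, and one of them appears no later than the other along $P$. If $I_s(x)$ appears on $P$ weakly before $y$, the subpath of $P$ from $I_s(x)$ to $y$ (or the equality $I_s(x)=y$) witnesses $I_s(x)\preceq y$; otherwise $y$ strictly precedes $I_s(x)$ on $P$ and the corresponding subpath gives $y\prec I_s(x)$.

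The second assertion for $I_t(y)$ and $x$ would be handled by exactly the same recipe, concatenating a simple $yx$-path with a simple $xt$-path to form a simple $yt$-path containing $x$, and noting that the $t$-dominator $I_t(y)$ must lie on it; the relative order of $x$ and $I_t(y)$ then yields $I_t(y)\preceq x$ or $x\prec I_t(y)$. The whole argument is essentially immediate from the definitions, and I do not anticipate a genuine obstacle; the only step that needs even a moment's care is the simple-path verification for $P$, and that is a one-line consequence of $D^+$ being a DAG.
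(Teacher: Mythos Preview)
Your proof is correct and follows essentially the same idea as the paper's: both arguments hinge on an $sx$-path through $y$ on which $I_s(x)$ must lie by the dominator property. The paper phrases this as a one-line contradiction (if neither relation held, such a path would avoid $I_s(x)$), whereas you give the direct version with the simple-path verification spelled out; the content is the same.
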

\begin{proof}
We show the first statement and the second statement is similar.
If neither of the two conditions holds, there is an $sx$-path passing through $y$ and avoiding $I_s(x)$, which contradicts the definition of dominator.
\qed\end{proof}

The following corollary comes from Lemma~\ref{idom1}.
\begin{corollary}\label{idom2}
If $y\prec x$ and $d_s(I_s(x))<d_s(y)$, then $I_s(x)\prec y$.
Similarly,
if $y\prec x$ and $d_s(I_t(y))>d_s(x)$, then $x\prec I_t(y)$.
\end{corollary}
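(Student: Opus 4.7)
The plan is to derive each implication directly from Lemma~\ref{idom1} by using the positivity of edge lengths to turn the ancestor relation into a strict comparison of $d_s$-values, and then eliminate one of the two alternatives offered by the lemma.

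For the first statement, I would start from the hypothesis $y \prec x$ and invoke Lemma~\ref{idom1} to get the dichotomy: either $I_s(x) \preceq y$ or $y \prec I_s(x)$. In the second alternative, $y \prec I_s(x)$ means $y$ is a strict ancestor of $I_s(x)$ in $D^+$, so because every edge has positive length we obtain $d_s(y) < d_s(I_s(x))$, contradicting the assumption $d_s(I_s(x)) < d_s(y)$. Thus the first alternative must hold, i.e. $I_s(x) \preceq y$. If moreover $I_s(x) = y$, then $d_s(I_s(x)) = d_s(y)$, again contradicting strict inequality. Hence $I_s(x) \prec y$, as claimed.

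The second statement is completely symmetric: apply Lemma~\ref{idom1} to get either $I_t(y) \preceq x$ or $x \prec I_t(y)$. The first alternative would force $d_s(I_t(y)) \leq d_s(x)$ (strict unless $I_t(y) = x$, but then equality), contradicting $d_s(I_t(y)) > d_s(x)$; so we conclude $x \prec I_t(y)$.

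There is essentially no obstacle here, since Lemma~\ref{idom1} does the combinatorial work and all that remains is to translate ancestor relations in $D^+$ into inequalities on $d_s$, which is justified by the fact that $D^+$ is oriented in the direction of increasing $d_s$ and all edge lengths are positive.
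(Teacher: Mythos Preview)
Your proof is correct and follows exactly the line the paper intends: the paper simply states that the corollary ``comes from Lemma~\ref{idom1}'', and you have spelled out precisely how, using the dichotomy of Lemma~\ref{idom1} together with the fact that $d_s$ strictly increases along $D^+$ (positive edge lengths) to rule out the unwanted alternative.
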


\section{Optimal backward path}
In this section we show an efficient algorithm for finding an optimal backward path.
For this problem, only vertices and edges in $D^+$ need considering and any vertex is assumed in $D^+$ in this section.
Since the numbers of vertices and edges of $D^+$ are also bounded by $n$ and $m$ respectively, we shall neither distinguish $|V|$ and $|V(D^+)|$, nor $|E|$ and $|E(D^+)|$.

\subsection{The objective function and the constraints} 

By the previous result shown in the introduction, an optimal backward path has the form $Q_1\circ Q_2^{-1}\circ Q_3$, in which ``$\circ$'' means concatenation, $Q_i$ are paths in $D^+$ and $Q_2^{-1}$ means the reverse path of $Q_2$. Since the optimal path is required to be simple, the three subpaths must be simple and internally disjoint, in which two paths are internally disjoint if they have no common vertex except for their endpoints. Therefore our goal is to find $x,y\in V(D)$ minimizing 
\begin{eqnarray}
d(s,x)+d(x,y)+d(y,t)
\end{eqnarray}
subject to that there are an $sx$-path, a $yx$-path and a $yt$-path in $D^+$ which are mutually internally disjoint. 
If $x$ and $y$ satisfy the constraint, we say $(x,y)$ is valid. 

Since all paths in $D^+$ are shortest, we have $d(y,t)=d(y,x)+d(x,t)$ and $d(s,x)+d(x,t)=d(s,t)$, and the objective function can be simplified to $d(s,t)+2d(x,y)$ and also equivalent to $d(x,y)$ since $d(s,t)$ is independent on $x$ and $y$.

For any vertex $x$, let $C(x)=\{v| I_s(x)\prec v\prec x\}$.
The vertices in $C(x)$ form a closed region in the sense that no path can enter this region without passing through $I_s(x)$, and it is easy to see that, for any vertex $x\neq s$, $C(x)=\emptyset$ iff $x$ has only one parent.
The most important thing is that, as shown later, the vertices which are valid for $x$ must be in $C(x)$. 
\begin{lemma}\label{hier}
For any $y\in C(x)$, $I_s(x)\preceq I_s(y)$.
\end{lemma}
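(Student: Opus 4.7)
The plan is to reduce the lemma to the stronger intermediate claim that $I_s(x)$ is itself an $s$-dominator of $y$. Once this is established the lemma is immediate: by the defining property of the immediate dominator $I_s(y)$, every $s$-dominator of $y$ is either equal to $I_s(y)$ or is an $s$-dominator of $I_s(y)$, and in the latter case it is a strict ancestor of $I_s(y)$ (since all dominators are ancestors). So in either case $I_s(x) \preceq I_s(y)$.

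For the intermediate claim I would argue by contradiction. Unpack $y\in C(x)$ to get $I_s(x)\prec y\prec x$, and suppose for contradiction that some $sy$-path $P$ in $D^+$ avoids $I_s(x)$. Fix any $yx$-path $Q$ in $D^+$, which exists since $y\prec x$. The first step is to observe that $Q$ does not pass through $I_s(x)$: every vertex of $Q$ is a descendant of $y$ in $D^+$, whereas $I_s(x)\prec y$ makes $I_s(x)$ a strict ancestor of $y$, and in the dag $D^+$ a vertex cannot be both. Now consider the concatenation $P\circ Q$, an $sx$-walk in $D^+$ containing no occurrence of $I_s(x)$. Since $D^+$ is acyclic, the walk is automatically vertex-simple (any repeated vertex would close a directed cycle), so $P\circ Q$ is a simple $sx$-path avoiding $I_s(x)$, which contradicts the defining property of the dominator $I_s(x)$.

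The main obstacle is really the modelling step: recognising that the proof wants a purely graph-theoretic dominator argument rather than a numeric argument with $d_s(\cdot)$, and that Lemma~\ref{idom1} and Corollary~\ref{idom2} by themselves only compare $I_s(x)$ with $y$, not with $I_s(y)$. Once one decides to show directly that $I_s(x)$ dominates $y$, the only subtlety is the double use of acyclicity of $D^+$ — first to exclude $I_s(x)$ from $Q$, and then to turn the walk $P\circ Q$ into a simple path — both of which are routine consequences of the positivity of edge lengths established in the preliminaries.
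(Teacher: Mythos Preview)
Your proof is correct and shares the paper's core idea: both arguments show that $I_s(x)$ is an $s$-dominator of $y$, after which $I_s(x)\preceq I_s(y)$ follows from the defining property of the immediate dominator. The only difference is cosmetic---the paper reaches the dominator claim via the observation that all in-neighbors of $C(x)$ lie in $C(x)\cup\{I_s(x)\}$ (together with Lemma~\ref{idom1} for comparability), whereas you obtain it directly by concatenating a hypothetical $I_s(x)$-avoiding $sy$-path with any $yx$-path and invoking acyclicity of $D^+$.
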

\begin{proof}
Since $I_s(x)\prec y$, we have $I_s(x)\preceq I_s(y)$ or $I_s(y)\prec I_s(x)$ by Lemma~\ref{idom1}.
By the definition of $I_s(x)$, the in-neighbors of $C(x)$ are contained in $C(x)\cup \{I_s(x)\}$.
Therefore it is impossible that $I_s(y)\prec I_s(x)$.
\qed\end{proof}

\begin{lemma}\label{close}
If $y\in C(x)$, there are two internally disjoint paths from $I_s(x)$, and $y$ respectively, to $x$. 
\end{lemma}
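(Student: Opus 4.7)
My plan is to reduce the claim to the directed vertex form of Menger's theorem via a super-source construction in $D^+$. First I would record that $y \in C(x)$ means $I_s(x) \prec y \prec x$, so $y \notin \{I_s(x), x\}$, and that each of $I_s(x) \to x$, $I_s(x) \to y$, $y \to x$ already exists as a directed path in $D^+$ by definition of $\prec$.

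The critical preliminary observation is that $y$ itself is not an $s$-dominator of $x$. For if it were, the definition of $I_s(x)$ as the closest $s$-dominator of $x$ would make $y$ an $s$-dominator of $I_s(x)$, forcing $y \preceq I_s(x)$; combined with $I_s(x) \prec y$, this contradicts the acyclicity of $D^+$. Consequently some $s$-to-$x$ path in $D^+$ misses $y$, and its suffix from $I_s(x)$ is an $I_s(x)$-to-$x$ path avoiding $y$.

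Next I would introduce a fresh vertex $s'$ with arcs $s' \to I_s(x)$ and $s' \to y$ in the augmented DAG and apply directed Menger to seek two internally vertex-disjoint $s'$-to-$x$ paths. Suppose, for contradiction, no such pair exists; then some vertex $v \notin \{s', x\}$ sits on every $s'$-to-$x$ path. The case $v = I_s(x)$ is impossible because $I_s(x) \prec y$ prevents any $y$-to-$x$ directed path from visiting $I_s(x)$; the case $v = y$ is ruled out by the preliminary observation, which supplies an $I_s(x)$-to-$x$ path avoiding $y$; and any other $v$ must be an $s$-dominator of $x$, since every $s$-to-$x$ path factors through $I_s(x)$ followed by an $I_s(x)$-to-$x$ subpath, all of which now contain $v$. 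Minimality of $I_s(x)$ then forces $v \preceq I_s(x)$, whereas $v$ lying on an actual $I_s(x)$-to-$x$ path forces $I_s(x) \prec v$, a further acyclicity contradiction.

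With every candidate cut vertex excluded, Menger's theorem yields two internally disjoint $s'$-to-$x$ paths. They must leave $s'$ through distinct arcs, otherwise they would share their second vertex, so one realises an $I_s(x)$-to-$x$ directed path in $D^+$ and the other a $y$-to-$x$ directed path; deleting $s'$ leaves the required internally disjoint pair whose only common vertex is $x$. I expect the main obstacle to be the $v = y$ case, since it is precisely where the \emph{immediate}-dominator minimality of $I_s(x)$ is used substantively (through Lemma~\ref{hier}'s logic); the remaining cases are short once one keeps the DAG order of $I_s(x), y, x$ straight.
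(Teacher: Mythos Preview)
Your proof is correct, but it takes a different route from the paper's. The paper first applies Menger between $I_s(x)$ and $x$: since no vertex of $C(x)$ can be an $(I_s(x),x)$-cut (else it would be an $s$-dominator of $x$ strictly closer than $I_s(x)$), there exist two internally disjoint $I_s(x)$-to-$x$ paths $P_1,P_2$. If $y$ lies on one of them, that path splits at $y$ to give the desired pair; otherwise the paper takes any $y$-to-$x$ path, follows it until it first hits $P_1\cup P_2$ at some vertex $q$, and splices onto that $P_i$ from $q$ to $x$, yielding a $y$-to-$x$ path disjoint from the other $P_j$.

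Your super-source construction folds this two-step argument into a single application of Menger: by adding $s'$ with arcs to $I_s(x)$ and to $y$, the cut analysis you carry out (ruling out $v=I_s(x)$, $v=y$, and $v\in C(x)\setminus\{y\}$) is exactly the same dominator-minimality reasoning the paper uses, but Menger then hands you the two required paths directly, with no rerouting case analysis. The trade-off is that your argument needs the auxiliary gadget and a slightly longer case split on the putative cut vertex, whereas the paper's version is more explicitly constructive (one can read off the paths). Both are short and essentially equivalent in depth; your closing remark that the $v=y$ case is the substantive one is apt, since that is precisely where ``immediate'' in immediate dominator does work.
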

\begin{proof}
Let $p=I_s(x)$.
By the definition of immediate dominator, no vertex in $C(x)$ is a $px$-cut and therefore there are two internally disjoint $px$-paths, said $P_1$ and $P_2$.
If $y$ is on one of them, we have done.
Otherwise, let $P_3$ be any $yx$-path and $q$ be the first vertex on $P_3$ and also in $V(P_1)\cup V(P_2)$.
W.l.o.g. let $q\in V(P_1)$. 
Then, the path $P_3[y,q]\circ P_1[q,x]$ is a $yx$-path disjoint to $P_2$.

\qed\end{proof}

Next we derive the objective function and its constraints.
\begin{lemma}\label{cand}
If the pair $(x,y)$ is valid, then $y\in C(x)$ and $x\prec I_t(y)$.
\end{lemma}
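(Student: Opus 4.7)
The plan is to prove each of the two conclusions by contradiction using Lemma~\ref{idom1}. Since validity gives a $yx$-path in $D^+$, we start from the fact $y\prec x$, and then invoke Lemma~\ref{idom1} once with $I_s(x)$ and once with $I_t(y)$ to reduce to a short list of cases, each of which violates the mutual internal disjointness of $Q_1$, $Q_2^{-1}$, $Q_3$ that validity requires.

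For $y\in C(x)$: Lemma~\ref{idom1} gives either $I_s(x)\preceq y$ or $y\prec I_s(x)$, so it suffices to eliminate (i) $y\prec I_s(x)$ and (ii) $y = I_s(x)$. For (i), I would first establish the mirror of the observation used in the proof of Lemma~\ref{hier}: since $I_s(x)$ is an $s$-dominator of $x$, the in-neighbors of the region $C(x)\cup\{x\}$ are contained in $C(x)\cup\{I_s(x)\}$. Consequently any $yx$-path starting from $y\notin C(x)\cup\{I_s(x)\}$ must enter this region for the first time through $I_s(x)$, hence passes through $I_s(x)$. Combining this with the fact that every $sx$-path passes through $I_s(x)$, and noting $I_s(x)\ne s$ in this case (since $y\prec I_s(x)$), I conclude that $I_s(x)$ is an internal vertex common to $Q_1$ and $Q_2$, contradicting internal disjointness. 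For (ii), if additionally $y=s$, then the full concatenation $Q_1\circ Q_2^{-1}\circ Q_3$ revisits $s$, so it is not simple; otherwise $y=I_s(x)\ne s,x$ is strictly internal to every $sx$-path while being an endpoint of $Q_2^{-1}$, again destroying simplicity.

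For $x\prec I_t(y)$: the argument is symmetric. Lemma~\ref{idom1} gives either $I_t(y)\preceq x$ or $x\prec I_t(y)$, so it suffices to rule out (iii) $I_t(y)=x$ and (iv) $I_t(y)\prec x$. In case (iii), $x$ lies on every $yt$-path, so $x$ appears in $Q_3$; but $x$ is already the common endpoint of $Q_1$ and $Q_2^{-1}$, and the endpoints of $Q_3$ are $\{y,t\}$, so $x$ is a shared vertex between $Q_1$ and $Q_3$ that is not a common endpoint, contradicting disjointness. In case (iv), the mirror of the in-neighbor property says that the out-neighbors of the ``downward region'' $\{v:y\prec v\prec I_t(y)\}\cup\{y\}$ lie in that region together with $I_t(y)$; hence every $yx$-path with $I_t(y)\prec x$ must pass through $I_t(y)$, and of course every $yt$-path passes through $I_t(y)$, so $I_t(y)$ is a common internal vertex of $Q_2$ and $Q_3$, again a contradiction.

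The main obstacle will be case (i) and its mirror (iv): I need the structural fact that $y\prec I_s(x)$ forces every $yx$-path through $I_s(x)$ (respectively that $I_t(y)\prec x$ forces every $yx$-path through $I_t(y)$). This is the ``exit/entry through the immediate dominator'' property and is essentially what underlies Lemma~\ref{hier}; I will state and justify it in one or two lines at the start of the proof. The remaining degenerate subcases ($y=s$, $I_s(x)=s$, $x=t$, etc.) are handled by observing that the corresponding concatenated path would revisit $s$, $x$, or $t$, and so cannot be valid.
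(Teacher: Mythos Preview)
Your proposal is correct and follows essentially the same route as the paper's proof: use Lemma~\ref{idom1} to reduce to the cases $y\preceq I_s(x)$ (respectively $I_t(y)\preceq x$) and then argue that the required internally disjoint paths cannot exist because both the $sx$-path and the $yx$-path are forced through $I_s(x)$. The paper compresses your cases (i) and (ii) into the single line ``if $y\preceq I_s(x)$, by the definition of immediate dominator, any $sx$-path and $yx$-path contain $I_s(x)$ simultaneously and cannot be disjoint''; your explicit justification of the entry-through-dominator property and your separate treatment of the boundary cases $y=I_s(x)$, $y=s$, $I_t(y)=x$ are additional care rather than a different idea.
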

\begin{proof}
By definition, $y\prec x$. By Lemma~\ref{idom1}, either $I_s(x)\prec y$ or $y\preceq I_s(x)$. 
If $y\preceq I_s(x)$, by the definition of immediate dominator, any $sx$-path and $yx$-path contain $I_s(x)$ simultaneously and cannot be disjoint.
Therefore we have $I_s(x)\prec y$.
The relation $x\prec I_t(y)$ can be shown similarly.
\qed\end{proof}
Define 
\begin{eqnarray}
g(x,y)=\left\{\begin{array}{ll}
d(y,x)\;&\mbox{if }y\in C(x)\mbox{ and }x\prec I_t(y) \\
\infty&\mbox{otherwise}
\end{array}\right. \label{obj}
\end{eqnarray}
and let $g^*(x)=\min_y g(x,y)$.

\begin{figure}[t]
\begin{center}
\epsfbox{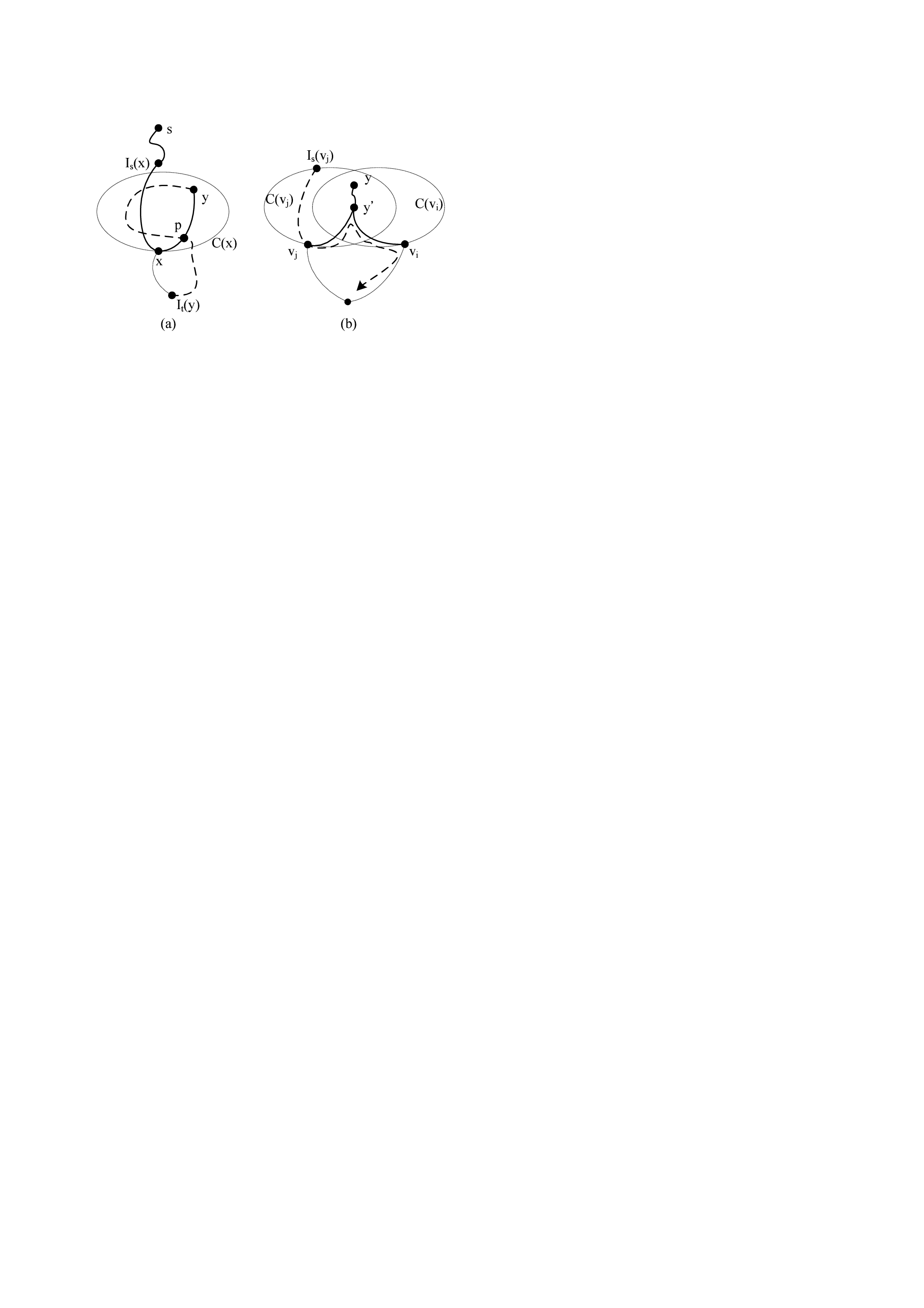}
\caption{(a). For Lemma \ref{suff}: $P_1$ is the solid line and $P_2$ is the dash line.
(b). For Corollary~\ref{black}: The dash line illustrates a feasible backward path.}
\label{backfig}
\end{center}
\end{figure}
\begin{lemma}\label{suff}
If $g^*(x)\neq \infty$ and $y^*=\arg\min_y g(x,y)$, then $(x,y^*)$ is valid.
\end{lemma}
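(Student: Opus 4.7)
The plan is a proof by contradiction that exploits the minimality of $y^*$: if $(x,y^*)$ were not valid, I will exhibit a vertex $y'\in C(x)$ with $x\prec I_t(y')$ and $d(y',x)<d(y^*,x)$, contradicting $y^*=\arg\min_y g(x,y)$.

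Let $p=I_s(x)$. Lemma~\ref{close} supplies two internally disjoint paths $P_1: p\to x$ and $P_2: y^*\to x$ in $D^+$, with $y^*\notin V(P_1)$. Prepending any $sp$-path in $D^+$ to $P_1$ produces an $sx$-path $\tilde P_1$; its prepended segment consists of $s$ and strict ancestors of $p$, while $P_2$'s vertices are $x$ together with strict descendants of $p$, so $\tilde P_1$ and $P_2$ meet only at $x$. Thus $\tilde P_1$ and $P_2$ already furnish two of the three internally disjoint paths required by validity, and the problem reduces to finding a $y^*t$-path $P_3$ internally disjoint from both.

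From $x\prec I_t(y^*)$, the vertex $x$ cannot be a $t$-dominator of $y^*$: otherwise, by definition of immediate dominator, $x$ would also dominate $I_t(y^*)$, forcing $I_t(y^*)\preceq x$, which together with $x\prec I_t(y^*)$ is impossible in the DAG $D^+$. Hence some $y^*t$-path $P_3$ in $D^+$ avoids $x$. If $P_3$ is internally disjoint from $P_1\cup P_2$, we are done. Otherwise, let $v$ be the last vertex on $P_3$ lying in $V(P_1)\cup V(P_2)\setminus\{y^*\}$; then $v$ is an internal vertex of $P_1$ or $P_2$, so $v\in C(x)$, and $P_3[v,t]$ is a $vt$-path avoiding both $x$ and $V(P_1\cup P_2)\setminus\{v\}$.

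The key step is to locate a valid candidate below $v$ by chasing immediate $t$-dominators. Set $w_0=v$ and $w_{i+1}=I_t(w_i)$. Each $I_t(w_i)$ lies on every $w_it$-path, in particular on $P_3[w_i,t]$, so the whole chain lies on $P_3[v,t]$ and $d_s(w_0)<d_s(w_1)<\cdots$ strictly. By Lemma~\ref{idom1}, each $w_{i+1}$ satisfies $w_{i+1}\preceq x$ or $x\prec w_{i+1}$; the case $w_{i+1}=x$ is impossible since $P_3[w_i,t]$ avoids $x$. Let $k$ be the first index with $x\prec w_k$; then $w_{k-1}\in C(x)$ and $x\prec I_t(w_{k-1})=w_k$, so $w_{k-1}$ is a valid candidate for $g(x,\cdot)$. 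Since $v$ is an internal vertex of $P_3$ we have $d_s(v)>d_s(y^*)$, and $d_s(w_{k-1})\geq d_s(v)$ gives $d(w_{k-1},x)<d(y^*,x)$, contradicting the minimality of $y^*$. The main obstacle is exactly this last step: one is tempted to use $v$ itself as the smaller candidate, but $v$ need not satisfy $x\prec I_t(v)$, and the $I_t$-chase is the device for walking out of $C(x)$, which works only because $P_3$ avoids $x$ and hence the chain cannot exit $C(x)$ through $x$.
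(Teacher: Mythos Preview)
Your argument is correct and follows the same strategy as the paper: build the $sx$- and $y^*x$-paths from Lemma~\ref{close}, take a $y^*t$-path $P_3$ avoiding $x$ (using $x\prec I_t(y^*)$), and, if $P_3$ intersects the first two paths at some vertex other than $y^*$, exhibit a strictly closer feasible candidate for $x$, contradicting the minimality of $y^*$.

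The one substantive difference is your $I_t$-chase. The paper, after locating the last common vertex $p$, asserts in one line that $x\prec I_t(p)$ and immediately gets the contradiction $g(x,p)<g(x,y^*)$. You instead iterate $w_{i+1}=I_t(w_i)$ along $P_3[v,t]$ until the chain first passes $x$, and take $w_{k-1}$ as the improved candidate; the point is that a single application of Lemma~\ref{idom1} only yields $I_t(v)\preceq x$ or $x\prec I_t(v)$, and the former possibility (with $I_t(v)\prec x$) is not obviously excluded by the mere existence of a $vt$-path avoiding $x$. Your chase handles this cleanly: since every $w_i$ lies on $P_3[v,t]$ (hence avoids $x$) and has $d_s(w_i)\ge d_s(v)>d_s(I_s(x))$, the first $w_k$ with $x\prec w_k$ leaves $w_{k-1}\in C(x)$ with $x\prec I_t(w_{k-1})$ and $d_s(w_{k-1})>d_s(y^*)$. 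This makes explicit a step the paper compresses. Otherwise the two proofs coincide.
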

\begin{proof}
By Lemma~\ref{close}, since $y^*\in C(x)$, there are two disjoint paths from $I_s(x)$, and $y^*$ respectively, to $x$. 
Therefore we have a simple path, said $P_1$, from $s$, passing through $I_s(x)$ to $x$, and then from $x$ to $y^*$ by backward edges.
Since $x\prec I_t(y^*)$, there must be a path, said $P_2$, from $y^*$ to $t$ and avoiding $x$.
We shall show that $P_1$ and $P_2$ are disjoint, which completes the proof.
Suppose to the contrary that $p\neq y^*$ is the last common vertex of $P_1$ and $P_2$, i.e., any other common vertex
precedes $p$ in $P_2$ (Figure~\ref{backfig}.(a)). 
Since $p\in V(P_2)$, we have $y^*\prec p$, and $p\prec x$ because $p\in V(P_1)$. So, we have $p\in C(x)$.
Since $P_2[p,I_t(y)]$ is a path avoiding $x$ and $p\prec x\prec I_t(y^*)$, we have $x\prec I_t(p)$.
Therefore $g(x,p)=d(p,x)\neq\infty$ and $d(p,x)<d(y^*,x)$ since $y^*$ is an ancestor of $p$, 
a contradiction to the optimality of $y^*$.
That is, $P_1$ and $P_2$ must be disjoint.
\qed
\end{proof}
The proof of Lemma~\ref{suff} is constructive, and it implies an algorithm for finding a corresponding backward path for given $x$ and $y^*$. Furthermore the time complexity is apparently linear.
For the simplicity, in the following, we only focus on finding the length of the optimal backward path.
By Lemmas~\ref{cand} and \ref{suff}, our goal is to find $x$ and $y$ minimizing $g$, i.e., 
\begin{eqnarray}
{\rm OPT} =\min_x\min_{y\in C(x)}\{d_s(x)-d_s(y)| x\prec I_t(y) \} \label{obj1}
\end{eqnarray}
Or, by Corollary~\ref{idom2}, it can be also written as 
\begin{eqnarray}
{\rm OPT} =\min_x\min_{y\in C(x)}\{d_s(x)-d_s(y)| d_s(x)< d_s(I_t(y)) \} \label{obj2}
\end{eqnarray}
The convenience of the latter form is that we can easily determine the ancestor relation by simply comparing their $d_s$ values.
The above formula provides us a way to find the optimal backward path: for each vertex $x$, checking each vertex $y\in C(x)$. But the naive method takes at least $\Theta(n^2)$ time in worst case since $|C(x)|$ may be linear in $n$.

\subsection{An efficient algorithm}
We say ``the pair $(x,y)$ is feasible'' or ``$y$ is feasible for $x$'' if $g(x,y)\neq\infty$. We also say ``$x$ is feasible'' if there exists $y$ which is feasible for $x$.
Note that a feasible $(x,y)$ may be not valid. However, our algorithm find the $(x,y)$ minimizing function $g$, and by Lemma~\ref{suff} it must be valid.

Let $\mathcal{A}(x)$ be the set of parents of $x$.
Our algorithm basically finds $g^*(x)$ for each $x$ according to the following formula: 
\begin{eqnarray}
g^*(x)=\min_{p\in \mathcal{A}(x)}\min_{y\preceq p}g(x,y), \label{allp}
\end{eqnarray}
and ${\rm OPT} =\min_x\{g^*(x)\}$.
We denote by $F$ the set of all the feasible vertices, i.e.,
\[ F=\{x|g^*(x)\neq \infty\} \]

To make the algorithm efficient, we derive some properties to avoid non-necessary searches.
\begin{lemma}\label{fp}
If $y\prec x$ and $y\in F$, then $g(x,u)>g^*(y)$ for any $u\prec y$.
\end{lemma}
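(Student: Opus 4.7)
The plan is to prove Lemma~\ref{fp} by dispatching the trivial case $g(x,u)=\infty$ first, and otherwise exhibiting a witness $y'$ with $g(y,y')<g(x,u)$, splitting on the position of $u$ relative to $I_s(y)$.

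First I would dispose of the easy case: if $g(x,u)=\infty$, then since $y\in F$ gives $g^*(y)<\infty$, the strict inequality holds trivially. So assume $g(x,u)<\infty$, which by definition means $u\in C(x)$ (so $I_s(x)\prec u\prec x$) and $x\prec I_t(u)$, giving $g(x,u)=d_s(x)-d_s(u)$. The given hypotheses also supply $y\prec x$ (so $d_s(y)<d_s(x)$) and $u\prec y$ (so $d_s(u)<d_s(y)$).

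Next I would apply Lemma~\ref{idom1} to $u\prec y$ to conclude that either $I_s(y)\preceq u$ or $u\prec I_s(y)$, and split accordingly. In \textbf{Case 1}, when $I_s(y)\prec u$, I would verify that $u$ itself is feasible for $y$: the ancestry $I_s(y)\prec u\prec y$ puts $u\in C(y)$, and from $y\prec x\prec I_t(u)$ we get $y\prec I_t(u)$. Hence $g(y,u)=d_s(y)-d_s(u)$, and because $d_s(y)<d_s(x)$ this is strictly less than $d_s(x)-d_s(u)=g(x,u)$, yielding $g^*(y)\le g(y,u)<g(x,u)$.

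In \textbf{Case 2}, when $u\preceq I_s(y)$ (i.e.\ $u=I_s(y)$ or $u\prec I_s(y)$), $u$ is no longer in $C(y)$, so the natural witness $u$ fails and a different one is needed; this is the main obstacle. I would use the witness supplied by $y\in F$: pick $y^{\ast}$ attaining $g^{\ast}(y)=g(y,y^{\ast})=d_s(y)-d_s(y^{\ast})$. Since $y^{\ast}\in C(y)$ we have $I_s(y)\prec y^{\ast}$, and combining with $u\preceq I_s(y)$ gives $d_s(u)\le d_s(I_s(y))<d_s(y^{\ast})$. Then
\begin{equation*}
g^{\ast}(y)=d_s(y)-d_s(y^{\ast})<d_s(y)-d_s(u)<d_s(x)-d_s(u)=g(x,u),
\end{equation*}
where the first inequality uses $d_s(u)<d_s(y^{\ast})$ and the second uses $d_s(y)<d_s(x)$. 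Both cases give $g^{\ast}(y)<g(x,u)$, which closes the proof. The only genuinely subtle point is recognising that the case split on $I_s(y)$ is necessary (Case 1 fails in Case 2) and that the witness $y^{\ast}$ guaranteed by $y\in F$ automatically lies strictly below $u$ in $d_s$-value, which is exactly what powers the bound.
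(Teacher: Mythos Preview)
Your proof is correct and follows essentially the same route as the paper's: both arguments dispense with the case $g(x,u)=\infty$, then split on whether $I_s(y)\prec u$ (where $u$ itself witnesses $g^*(y)<g(x,u)$) or $u\preceq I_s(y)$ (where the $F$-witness $y^\ast\in C(y)$ does the job via $d_s(u)<d_s(y^\ast)$). The only cosmetic difference is that the paper first isolates the sub-case $y\preceq I_s(x)$ and invokes Lemma~\ref{hier}, whereas you fold that into the trivial $g(x,u)=\infty$ case; your organisation is arguably cleaner.
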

\begin{proof}
If $y\preceq I_s(x)$, $\min_{u\prec y}\{g(x,u)\}=\infty$ and the result holds since $y\in F$.
We only need to consider the remaining case that $I_s(x)\prec y$.
Let $u\prec y$ and $g(x,u)\neq \infty$.
By definition, $u\in C(x)$ and $x\prec I_t(u)$.
Since $y\in C(x)$, by Lemma~\ref{hier}, $I_s(x)\preceq I_s(y)$. 
If $I_s(y)\prec u$, $u$ is also feasible for $y$ and $d(u,y)<d(u,x)=d(u,y)+d(y,x)$.
Otherwise $d(u,x)> d(I_s(y),y)$. Since $y\in F$, by definition $g^*(y)<d(I_s(y),y)$.
\qed\end{proof}
\begin{lemma}\label{nfp}
If $y\prec x$ and $y\notin F$, then $g(x,u)=\infty$ for any $u\in C(y)$.
\end{lemma}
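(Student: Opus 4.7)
The plan is to argue by contrapositive (or equivalently by contradiction): I would suppose there exists some $u\in C(y)$ with $g(x,u)\neq\infty$ and derive that $y$ must actually be feasible, contradicting $y\notin F$. The structure is dictated by the definition of $g$ in equation (\ref{obj}): to say $g(x,u)\neq\infty$ unpacks to the two conditions $u\in C(x)$ and $x\prec I_t(u)$, and to say that $u$ witnesses $y\in F$ needs $u\in C(y)$ together with $y\prec I_t(u)$. The first of these we have by assumption, and the second should follow immediately by transitivity from $y\prec x$ and $x\prec I_t(u)$.

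More concretely, first I would fix $u\in C(y)$ and assume $g(x,u)\neq \infty$, so that by definition $u\in C(x)$ and $x\prec I_t(u)$. Next, since the hypothesis of the lemma gives $y\prec x$, the ancestor relation $\prec$ is transitive on $V(D^+)$ and hence $y\prec I_t(u)$. Combined with $u\in C(y)$, the definition of $g$ yields $g(y,u)=d(u,y)\neq\infty$, so $g^*(y)\neq\infty$, i.e., $y\in F$. This contradicts the hypothesis $y\notin F$, completing the proof.

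There is essentially no obstacle here; this is a short definitional argument relying only on transitivity of $\prec$ and on the definition of $g$. The only thing worth double-checking is that the definition of $C(\cdot)$ does not need to be touched: $u\in C(y)$ is exactly the first clause needed to certify feasibility of $u$ for $y$, and no further appeal (e.g.\ to Lemma~\ref{idom1} or Lemma~\ref{hier}) is required.
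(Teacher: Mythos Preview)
Your argument is correct and is essentially the contrapositive of the paper's own proof: the paper argues directly that $y\notin F$ forces $I_t(u)\preceq y$ for every $u\in C(y)$, hence $I_t(u)\prec x$, while you start from $x\prec I_t(u)$ and use transitivity of $\prec$ to get $y\prec I_t(u)$. Your direction is marginally cleaner since it needs only transitivity of $\prec$, whereas the paper's direction implicitly uses Lemma~\ref{idom1} to pass from ``not $y\prec I_t(u)$'' to ``$I_t(u)\preceq y$''.
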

\begin{proof}
Since $y\notin F$, for any vertex $u\in C(y)$, i.e., $I_s(y)\prec u\prec y$, we have $I_t(u)\preceq y$.
Since $y\prec x$,  $I_t(u)\prec x$ and $u$ cannot be feasible for $x$. 
\qed\end{proof}

\begin{lemma}\label{desf}
Let $v_j$ and $v_i$ be two descendants of $y$ and $d_s(v_j)\leq d_s(v_i)$.
If $g^*(v_j)\neq \infty$,  
$\min_{u\prec y}\{g(v_i,u)\}\geq g^*(v_j)$. 
\end{lemma}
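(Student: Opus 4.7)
The plan is to expand $g(v_i,u)=d_s(v_i)-d_s(u)$ and, using $d_s(v_j)\le d_s(v_i)$, reduce the claim to showing $d_s(v_j)-d_s(u)\ge g^*(v_j)$ for every $u\prec y$ with $g(v_i,u)\neq\infty$. The natural strategy is to try to use the same $u$ as a feasible witness for $v_j$; in the case where $u$ fails to be feasible for $v_j$, I will instead compare $d_s(u)$ against the $s$-distance of the minimizer $y^*$ in $g^*(v_j)$, exploiting that $y^*$ strictly descends from $I_s(v_j)$.

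The first step is to verify the ``$t$-side'' constraint $v_j\prec I_t(u)$. Feasibility of $u$ for $v_i$ gives $v_i\prec I_t(u)$, hence $d_s(I_t(u))>d_s(v_i)\ge d_s(v_j)$; since $u\prec y\prec v_j$, Corollary~\ref{idom2} then yields $v_j\prec I_t(u)$. The ``$s$-side'' constraint $I_s(v_j)\prec u$ is not automatic, so I split using Lemma~\ref{idom1}, which forces the two ancestors $u$ and $I_s(v_j)$ of $v_j$ to be comparable: either $I_s(v_j)\prec u$, or $u\preceq I_s(v_j)$.

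In the first case, $u\in C(v_j)$ and the previous step already supplies $v_j\prec I_t(u)$, so $u$ is feasible for $v_j$; then $g^*(v_j)\le g(v_j,u)=d_s(v_j)-d_s(u)\le d_s(v_i)-d_s(u)=g(v_i,u)$. In the second case, let $y^*$ realize $g^*(v_j)$; feasibility of $y^*$ gives $I_s(v_j)\prec y^*$, so $d_s(u)\le d_s(I_s(v_j))<d_s(y^*)$, and therefore $g(v_i,u)\ge d_s(v_j)-d_s(u)>d_s(v_j)-d_s(y^*)=g^*(v_j)$. The only delicate point is establishing $v_j\prec I_t(u)$; Corollary~\ref{idom2} is exactly what is needed for this, and once it is in hand the case split on $I_s(v_j)$ together with elementary arithmetic closes the argument.
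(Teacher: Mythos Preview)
Your proof is correct and follows essentially the same approach as the paper: verify $v_j\prec I_t(u)$ via Corollary~\ref{idom2}, then split on whether $I_s(v_j)\prec u$ (so $u$ is feasible for $v_j$) or $u\preceq I_s(v_j)$ (so bound via $d_s(I_s(v_j))$ against the minimizer $y^*$). Your per-$u$ case split is in fact a bit cleaner than the paper's global ``there exists such $u$ / otherwise'' dichotomy, but the ideas and the key uses of Lemma~\ref{idom1} and Corollary~\ref{idom2} are identical.
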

\begin{proof}
If there exists $u$ such that $I_s(v_j)\prec u\prec y$ and $g(v_i,u)\neq \infty$, then $d_s(I_t(u))>d_s(v_i)\geq d_s(v_j)$. Since $u$ is also an ancestor of $v_j$, $v_j\prec I_t(u)$ by Corollary~\ref{idom2}. Therefore $g(v_i,u)=d_s(v_i)-d_s(u)\geq d_s(v_j)-d_s(u)\geq g^*(v_j)$.
For otherwise there is no such $u$, and we have $\min_{x\prec y}\{g(v_i,x)\}\geq d_s(v_i)-d_s(I_s(v_j))\geq d_s(v_j)-d_s(I_s(v_j))> g^*(v_j)$.
\qed\end{proof}

\begin{corollary}\label{black}
Let $v_j$ and $v_i$ be two vertices and $d_s(v_j)\leq d_s(v_i)$. If $y\in C(v_j)\cap C(v_i)$ and $v_j$ is not an ancestor of $v_i$, then $g^*(v_j)\neq \infty$ and $\min_{u\prec y}\{g(v_i,u)\}\geq g^*(v_j)$.
\end{corollary}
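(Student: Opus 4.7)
The second conclusion $\min_{u \prec y}\{g(v_i, u)\} \geq g^*(v_j)$ is immediate from Lemma~\ref{desf} once $g^*(v_j) \neq \infty$ is established, since the remaining hypotheses of that lemma ($v_i, v_j$ descendants of $y$, and $d_s(v_j) \leq d_s(v_i)$) are supplied directly by the corollary's assumptions. So the task reduces to exhibiting some $u^* \in C(v_j)$ with $v_j \prec I_t(u^*)$.

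I will take $u^*$ to be the ``split point'' of two concrete paths. Fix arbitrary $D^+$-paths $P$ from $y$ to $v_j$ and $P'$ from $y$ to $v_i$ (both exist since $y \prec v_j$ and $y \prec v_i$), and let $u^*$ be the vertex on $P$ that also lies on $P'$ and appears latest along $P$. A preliminary observation is that $v_i$ and $v_j$ are $\prec$-incomparable: $v_j$ is not an ancestor of $v_i$ by hypothesis, while $v_i \prec v_j$ would give $d_s(v_i) < d_s(v_j)$, contradicting $d_s(v_j) \leq d_s(v_i)$. This rules out $v_j \in V(P')$ and $v_i \in V(P)$, so $u^* \prec v_j$ and $u^* \prec v_i$ strictly, and the tails $P[u^*, v_j]$ and $P'[u^*, v_i]$ meet only at $u^*$. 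Membership $u^* \in C(v_j)$ is then routine: $u^*$ is either $y$ or a descendant of $y$, so $I_s(v_j) \prec y \preceq u^* \prec v_j$.

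The main obstacle is the inequality $v_j \prec I_t(u^*)$. To settle it I will consider two concrete $u^*$-to-$t$ paths $R_1 = P[u^*, v_j] \circ Q_1$ and $R_2 = P'[u^*, v_i] \circ Q_2$, with $Q_1$ and $Q_2$ arbitrary $D^+$-paths from $v_j$ and $v_i$ to $t$, and analyse where $I_t(u^*)$ can sit inside $V(R_1) \cap V(R_2)$. The natural decomposition of this intersection into four pieces collapses: $V(P[u^*, v_j]) \cap V(P'[u^*, v_i]) = \{u^*\}$ by the choice of $u^*$, and the two cross pieces $V(P[u^*, v_j]) \cap V(Q_2)$ and $V(Q_1) \cap V(P'[u^*, v_i])$ are empty because any vertex in them would simultaneously be an ancestor of one of $v_j, v_i$ and a descendant of the other, contradicting the incomparability established above. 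Since $I_t(u^*) \neq u^*$, the only remaining option is $I_t(u^*) \in V(Q_1) \cap V(Q_2)$, which forces $v_j \preceq I_t(u^*)$; and the equality $I_t(u^*) = v_j$ is excluded because $v_j \notin V(Q_2)$ (once again by incomparability with $v_i$). Thus $v_j \prec I_t(u^*)$, giving $g(v_j, u^*) = d(u^*, v_j) < \infty$ and hence $g^*(v_j) \neq \infty$, after which Lemma~\ref{desf} delivers the second conclusion.
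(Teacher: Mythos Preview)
Your proof is correct and follows the same overall strategy as the paper: exhibit a single witness $u^*\in C(v_j)$ with $v_j\prec I_t(u^*)$, conclude $g^*(v_j)\neq\infty$, and then read off the inequality from Lemma~\ref{desf}.

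The only difference is in how the witness is produced and verified. The paper takes $y'$ to be a lowest common ancestor of $v_i$ and $v_j$ (with $y\preceq y'$), observes that a $y'v_i t$-path avoids $v_j$, and asserts $v_j\prec I_t(y')$ in one line. You instead fix concrete paths $P,P'$ and take $u^*$ to be their last shared vertex; this is in effect a path-local version of the same lowest-common-ancestor idea. Your explicit four-way decomposition of $V(R_1)\cap V(R_2)$ then pins down $I_t(u^*)\in V(Q_1)\cap V(Q_2)$ without appealing to any LCA machinery. The paper's version is shorter but leaves the reader to unpack why the LCA property actually forces $v_j\prec I_t(y')$; your version is longer but entirely self-contained, and the incomparability of $v_i,v_j$ is used cleanly to kill the two cross terms. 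Either argument suffices.
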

\begin{proof}
Since $y\in C(v_j)\cap C(v_i)$, $y$ is a common ancestor of $v_j$ and $v_i$.
Let $y'$ be a lowest common ancestor of them and $y\preceq y'$.
Since $v_j$ is not an ancestor of $v_i$, we have $y'\neq v_j$ and there is a path from $y'$ to $v_i$ avoiding $v_j$.
By definition $v_j\prec I_t(y')$ and $g^*(v_j)\neq \infty$ (Figure~\ref{backfig}.(b)).
The inequality follows directly from Lemma~\ref{desf}.
\qed\end{proof}

Our algorithm for the optimal backward path is as follows. 
\begin{tabbing}
\hspace*{1.5em} \= \hspace*{1.2em} \= \hspace*{1.2em} \= \hspace*{1.2em} \= \hspace*{1.2em} \= \kill \\
{\bf Algorithm} Bk\_N2SP \\
{\bf Input: }The digraph $D^+$. \\
{\bf Output: }The length of the optimal backward path. \\
{\bf 1:}\>find a topological order of $D^+$ and label the vertices such that \\
\>if $(v_i,v_j)\in E(D^+)$, $i<j$;\\
{\bf 2:}\>find $I_s(v)$ and $I_t(v)$ for each $v$;\\
{\bf 3:}\>$\beta\leftarrow \infty$;\\
\>$color(v)\leftarrow white$, $\forall v\in V(D^+)$;\\
\>$color(s)\leftarrow black$;\\
{\bf 4:}\>for $i\leftarrow 2$ to $n-1$ do \\
{\bf 5:}\>\>for each parent $p$ of $v_i$ do \\
{\bf 6:}\>\>\>$y\leftarrow p$; \\
{\bf 7:}\>\>\>while $I_s(v_i)\prec y$ and $g(v_i,y)=\infty$ and $color(y)=white$ do\\
{\bf 8:}\>\>\>\>$color(y)\leftarrow black$; $y\leftarrow I_s(y)$; \\
{\bf 9:}\>\>\>if $g(v_i,y)\neq \infty$ then\\
{\bf 10:}\>\>\>\>$\beta\leftarrow \min\{\beta,d_s(v_i)-d_s(y)\}$; \\
\>\>\>\>$color(v_i)\leftarrow black$; $color(y)\leftarrow black$;\\
{\bf 11:}\>\>end for next parent;\\
{\bf 12:}\>end for next $i$;\\
{\bf 13:}\>output $d(s,t)+2\beta$\\
\end{tabbing}

In the algorithm, each vertex $v$ is associated with a color, which is white initially and may be set to black as the algorithm runs.
The algorithm begins with a preprocessing stage at Steps 1--3.
We first arrange the vertices according to a topological order in $D^+$.
Note that $s=v_1$ and $t=v_n$.
Then we find the $s$- and $t$-immediate dominators for each vertex.
All vertices are assigned white color except that $s$ is colored black.
The variable $\beta$ is used to keep the objective value of the best solution found so far.
In the main loop from Steps 4 to 12, we deal with all the vertices one by one except for $s$ and $t$. 
In the $i$-th iteration, we try to find any feasible $y\in C(v_i)$ for $v_i$ from each parent of $v_i$ (Steps 5--11).

\subsection{Correctness and time complexity}
We shall show the correctness of the algorithm by examining the feasibility and the optimality.
\subsubsection*{Feasibility.}
The algorithm finds solutions only at Step 10. By Lemma~\ref{suff}, the final solution is feasible as long as its minimality can be ensured. 

\subsubsection*{Optimality.}
Apparently neither $s$ nor $t$ can be a feasible vertex.
By Eq. (\ref{allp}) what we need to show is that the solutions we skipped are really not better.
By Eq. (\ref{obj1}), we should check all $y\in C(v_i)$ for each $v_i$.
There are two kinds of solutions skipped by the algorithm.
\begin{itemize}
\item {\bf Type 1:}
The first kind of possible solutions ignored by the algorithm is at Step 8 where we look for feasible solution $g(v_i,y)$ for any $y\preceq p$ but do not try all such $y$. Instead, we jump to $I_s(y)$ after checking $y$ and skip the vertices in $C(y)$.
If $y$ is feasible, by Lemma~\ref{fp}, we do not need to check $g(v_i,u)$ for any ancestor $u$ of $y$; and if 
$y$ is not feasible, by Lemma~\ref{nfp}, ignoring $(v_i,u)$ for any $u\in C(y)$ does not affect the optimality.

\item {\bf Type 2:} The second kind of skipped solutions is due to the conditions of the while-loop at Step 7.
The while-loop stops when $y\preceq I_s(v_i)$ or $g(v_i,y)\neq \infty$ or $color(y)=black$.
Except for the first condition, the loop may terminate before reaching $I_s(v_i)$.
For the second condition, if $g(v_i,y)\neq \infty$, the optimality is ensured by Lemma~\ref{fp}; and 
we divide the third condition $color(y)=black$ into two sub-cases according to when it turns black. 
\begin{itemize}
\item If $y$ is colored black in this iteration, $y$ must have been checked at this iteration from another parent of $v_i$. Therefore, if there exists any ancestor of $y$ we need checking, it must have already been checked from that parent. 
\item Otherwise $y$ is colored black before the $i$-th iteration, and this implies that $y\in C(v_j)$ for some $j<i$ or $y$ is feasible (marked black at Step 10 in the iteration checking $y$ with its ancestor). If $y$ is feasible, we can safely skip any ancestor of $y$ by Lemma~\ref{fp}. Otherwise, if $v_j$ is not an ancestor of $v_i$, we have that $v_i$ or $v_j$ must be feasible and $\min_{u\prec y}\{g(v_i,u)\}\geq \min\{g^*(v_i),g^*(v_j)\}$ by Corollary~\ref{black}. Note that we still need to and do check $g(v_i,y)$ at Step 9 because $d_s(I_s(v_i))$ may be smaller than $d_s(I_s(v_j))$. 

The remaining case is that $v_j\prec v_i$ and $y\in C(v_j)$.
Let $(y_1, y_2,\ldots ,y_k=y)$ be the sequence of vertices checked at the while-loop.
Since $y_{q+1}=I_s(y_q)$ for $1\leq q\leq k-1$ and $I_s(v_j)\prec y_k\prec v_j$, we have that $v_j$ does not appear in the sequence. Hence, there exists $y_q$ which is in $C(v_j)$ and has a path to $v_i$ avoiding $v_j$.
Therefore $v_j$ is feasible, 
and then similar to Corollary~\ref{black}, it is not necessary to check any ancestor of $y$ for $v_i$.  
\end{itemize}
\end{itemize}
By the above explanation, we conclude the correctness of the algorithm. 

\begin{lemma}\label{correct}
Algorithm Bk\_N2SP computes the optimal backward path length correctly.
\end{lemma}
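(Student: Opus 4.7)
The plan is to verify two properties of the returned value $d(s,t)+2\beta$: it corresponds to an actual backward $st$-path (feasibility) and no backward $st$-path is strictly shorter (optimality). Feasibility comes essentially for free from Lemma~\ref{suff}, since $\beta$ is updated at Step~10 only when $g(v_i,y)\neq\infty$, and the constructive proof of Lemma~\ref{suff} converts the globally minimising such pair into a simple backward path of length $d(s,t)+2(d_s(v_i)-d_s(y))$. The real work is therefore in optimality, i.e.\ in showing that $\beta=\min_x g^*(x)$ at termination.

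For optimality I would proceed by induction on $i$, with invariant ``after iteration $i$, $\beta\leq\min_{2\leq k\leq i}g^*(v_k)$''. Using the decomposition $g^*(x)=\min_{p\in\mathcal{A}(x)}\min_{y\preceq p}g(x,y)$ of Eq.~(\ref{allp}), for every $v_i$ and every parent $p$ one must verify that the inner minimisation over $y\preceq p$ is either performed directly at Step~9 or safely skipped. The \emph{Type~1} skips, performed inside Step~8 when we jump from $y$ to $I_s(y)$ and bypass every $u$ with $I_s(y)\prec u\prec y$, are routine: Lemma~\ref{nfp} handles infeasible $y$ and Lemma~\ref{fp} handles feasible $y$, and in each case the ignored $g(v_i,u)$ is lower-bounded by a value already present in $\beta$ by the inductive hypothesis.

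The main obstacle is the \emph{Type~2} skips, where the while-loop terminates because $color(y)=black$ even though $I_s(v_i)\prec y$. I would split on when $y$ was coloured. If during the current iteration, some sibling parent branch of $v_i$ already walked through $y$, so every ancestor of $y$ worth checking for $v_i$ was inspected on that branch. If in some earlier iteration $j<i$, two sub-cases arise. When $y$ was blackened at Step~10, $y$ itself is feasible and Lemma~\ref{fp} caps the ignored values by $g^*(y)\geq\beta$. When $y$ was blackened at Step~8, $y\in C(v_j)$; splitting further, if $v_j$ is not an ancestor of $v_i$ then Corollary~\ref{black} directly delivers $v_j\in F$ together with $\min_{u\prec y}g(v_i,u)\geq g^*(v_j)$, while if $v_j\prec v_i$ the ascent $y_1=p_j,\;y_{r+1}=I_s(y_r)$ climbed from some parent $p_j$ of $v_j$ cannot contain $v_j$ itself, so some $y_r\in C(v_j)$ has a path to $v_i$ avoiding $v_j$; concatenating any $v_i t$-path yields a $y_r t$-path avoiding $v_j$, certifying $v_j\in F$ and letting Lemma~\ref{desf} supply the same bound.

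Phrasing the induction hypothesis so that \emph{the value which lower-bounds each skipped candidate has actually been installed in $\beta$ by the time it is needed} is the step where I expect to spend the most care; once that invariant is pinned down, each sub-case above collapses to a one-line application of one of the four preparatory lemmas (Lemmas~\ref{fp}, \ref{nfp}, \ref{desf}, or Corollary~\ref{black}).
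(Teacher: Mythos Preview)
Your proposal follows essentially the paper's own argument: the same feasibility/optimality split via Lemma~\ref{suff}, the same Type~1/Type~2 classification of skipped candidates, and the same sub-case analysis for a black vertex $y$, invoking Lemmas~\ref{fp}, \ref{nfp}, \ref{desf} and Corollary~\ref{black} at exactly the points the paper does. Your explicit inductive invariant on $\beta$ is a welcome tightening of what the paper leaves informal.

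One slip to correct: in the final sub-case ($v_j\prec v_i$ with $y$ blackened at Step~8 in iteration~$j$), the sequence you must examine is the \emph{current} iteration's ascent $y_1=p,\ y_{r+1}=I_s(y_r)$ from the parent $p$ of $v_i$ (this is what the paper does), not an ascent from a parent $p_j$ of $v_j$. Starting from a parent of $v_j$ makes the claim ``the sequence cannot contain $v_j$'' trivially true but gives you no link to $v_i$ whatsoever, so the conclusion ``some $y_r\in C(v_j)$ has a path to $v_i$ avoiding $v_j$'' does not follow. With the ascent taken in iteration~$i$, the point is that $y_1=p$ is adjacent to $v_i$, and the jumps $y_q\mapsto I_s(y_q)$ cannot hit $v_j$ (otherwise the next step would land on $I_s(v_j)$ and overshoot $y_k=y\in C(v_j)$); hence some $y_q$ lies in $C(v_j)$ while still reaching $v_i$ along the chain and the edge $(p,v_i)$, bypassing $v_j$. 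With that fix your plan coincides with the paper's proof.
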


\subsubsection*{Time complexity.}
We show that the time complexity of our algorithm is $O(n+m)$.
Step 1 takes linear time for finding a topological order and Step 2 takes $O(m)$ time \cite{als99}.
Step 3 takes $O(n)$ time.
The inner loop (Steps 6--10) is entered $O(m)$ times since the total number of parents of all vertices is bounded by the number of edges. Since $y$ is always an ancestor of $v_i$, the conditions ``$I_s(v_i)\prec y$'' and ``$g(v_i,y)=\infty$'', as well as to compute $g(v_i,y)$, can all be done in constant time by checking the $d_s$ values. 
The remaining question is how many times Step 8 is executed.
By the condition of the while loop, only white vertices will be colored black at Step 8, and therefore it is executed at most $n$ times in total.

The algorithm assumes that $D^+$ is the input. 
It does not matter since $D^+$ can be constructed in linear time if the distances $d_s(v)$ and $d_t(v)$ are given for all $v$.
\begin{lemma}\label{backtime}
The time complexity of the algorithm Bk\_N2SP is $O(m+n)$.
\end{lemma}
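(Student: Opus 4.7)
The plan is to bound the cost of each step of Algorithm Bk\_N2SP separately and then sum them up, with the main subtlety being the amortized cost of the inner while-loop at Steps 7--8.

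First I would dispatch the preprocessing. Step 1 is a standard topological sort of the dag $D^+$, which costs $O(n+m)$ via DFS or Kahn's algorithm. Step 2 invokes the linear-time immediate-dominator algorithm of \cite{als99} on both $D^+$ and its reverse, costing $O(m)$. Step 3 initializes a single variable and an $n$-entry color array, for $O(n)$. Before entering the main loop it is also worth recording that every predicate used later (``$I_s(v_i)\prec y$'', ``$x\prec I_t(y)$'', and the value of $g(v_i,y)$) can be evaluated in $O(1)$ once the $d_s$ and $d_t$ values and the immediate dominators are stored: by Corollary~\ref{idom2} the ancestor test reduces to a comparison of $d_s$-values, and $g(v_i,y)=d_s(v_i)-d_s(y)$ is an arithmetic difference.

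Next I would handle the main loop at Steps 4--12 by separating the bookkeeping from the while-loop. The outer loop runs $n-2$ times. The for-loop at Step 5 iterates over the parents of $v_i$, so summed over $i$ its body is entered $\sum_i |\mathcal{A}(v_i)| = |E(D^+)| \le m$ times. Outside the while-loop, each such entry performs only the initialization $y\leftarrow p$, the conditional test at Step 9, and possibly an update of $\beta$ and two color assignments at Step 10, all in $O(1)$ by the remark above. This contributes $O(m)$ in total.

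The main obstacle is bounding the cumulative cost of the while-loop at Steps 7--8, since a single iteration of the outer loop could in principle walk up a long chain of immediate dominators. I would handle this with an amortization argument keyed on the color array. The guard of Step 7 requires $color(y)=white$, and Step 8 unconditionally recolors $y$ to black before moving on to $I_s(y)$; hence every vertex can serve as the value of $y$ in Step 8 at most once during the entire execution. Each such execution does constant work (one color write and one pointer follow). The total cost of Step 8, amortized over the whole algorithm, is therefore $O(n)$.

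Summing the bounds gives preprocessing $O(n+m)$, bookkeeping in the outer iterations $O(m)$, and the while-loop $O(n)$, so the algorithm runs in $O(n+m)$ time. I would close by noting the caveat already made in the paper: the algorithm takes $D^+$ as input, but $D^+$ itself can be constructed from the distances $d_s$ and $d_t$ in $O(n+m)$ time by the linear-time construction described in Section~2, so the overall complexity bound is unaffected.
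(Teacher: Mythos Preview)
Your proof is correct and follows essentially the same approach as the paper: preprocessing is linear, the inner for-loop is entered $O(m)$ times (bounded by the total number of parents), each test reduces to a $d_s$-comparison and hence costs $O(1)$, and Step~8 is amortized to $O(n)$ via the white-to-black color argument. The only cosmetic difference is that you spell out a few details (e.g., running the dominator algorithm on both orientations) that the paper leaves implicit.
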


By Lemmas~\ref{correct} and \ref{backtime}, we have the next result.
\begin{theorem}\label{thm:back}
The optimal backward path problem on undirected graphs with positive edge lengths can be solved in linear time if the distances from $s$ and $t$ to all the others are given.
\end{theorem}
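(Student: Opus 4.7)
The plan is to deduce Theorem \ref{thm:back} as a direct corollary of Lemmas \ref{correct} and \ref{backtime}, plus a short accounting of the preprocessing that turns an instance $(G,s,t,d_s,d_t)$ into the exact input expected by Algorithm Bk\_N2SP, namely the dag $D^+$ equipped with the immediate dominators $I_s$ and $I_t$.

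First I would invoke the characterizations already recorded in Section 2: a vertex $v$ lies in $V(D)$ precisely when $d_s(v)+d_t(v)=d(s,t)$, and a directed edge $(u,v)$ belongs to $D^+$ precisely when both endpoints are in $V(D)$ and $d_s(v)=d_s(u)+w(u,v)$. A single pass over $V$ and $E$ therefore builds $D^+$ in $O(m+n)$ time. For the two families of immediate dominators, I would cite the linear-time dominator algorithm of \cite{als99} applied twice: once to the dag $D^+$ with source $s$ to obtain $I_s$, and once to the edge-reversed dag with source $t$ to obtain $I_t$. Each call costs $O(m+n)$, so the preprocessing fits within the promised budget.

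With the preprocessing bounded, Lemma \ref{backtime} supplies an $O(m+n)$ bound for the main body of Bk\_N2SP, and Lemma \ref{correct} certifies that the reported value $d(s,t)+2\beta$ is indeed the length of an optimal backward path, via the objective in Eq. (\ref{obj2}) together with the constructive validity argument of Lemma \ref{suff}. Adding the three contributions yields the claimed linear-time bound, whence the theorem.

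I do not anticipate a real obstacle here: all of the combinatorial substance has already been absorbed into the preceding lemmas, and what remains is bookkeeping. The only point that warrants a sentence of care is the reduction from $(G,s,t,d_s,d_t)$ to $D^+$ with dominators, since the theorem is stated in terms of distances rather than directly in terms of $D^+$; but this reduction is exactly the linear-time routine described above.
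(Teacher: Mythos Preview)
Your proposal is correct and matches the paper's own argument: the theorem is stated immediately after Lemmas~\ref{correct} and \ref{backtime} and is justified simply by citing those two lemmas, with the linear-time construction of $D^+$ from the given distances already noted in the paragraph preceding Lemma~\ref{backtime} and the dominator computation already accounted for as Step~2 inside that lemma's time analysis. Your write-up is slightly more explicit about the preprocessing than the paper, but the content and structure are the same.
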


For general undirected graph with positive edge weights, the SSSP problem can be solved in $O(n\log n+m)$ time by the Dijkstra's algorithm using a Fibonacci Heap \cite{dijk}.
Therefore the next corollary directly comes from Theorem~\ref{thm:back} and the result of the outward path in \cite{kao10}. 
\begin{corollary}\label{cor:back}
The next-to-shortest path problem on undirected graphs with positive edge lengths can be solved in $O(n\log n+m)$ time.
\end{corollary}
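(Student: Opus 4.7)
The plan is to assemble the corollary from three ingredients, each of which is already either proved in this paper or available in the literature. First, I would run Dijkstra's algorithm twice on the undirected graph $G$ with positive edge weights, once from $s$ and once from $t$ (treating the graph as bidirected in the usual way). Using a Fibonacci heap this produces $d_s(v)$ and $d_t(v)$ for every $v\in V$ in $O(n\log n+m)$ time \cite{dijk}.

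Next, I would split the next-to-shortest path computation into the two subproblems identified in the introduction via Lemma~\ref{c1}: the optimal outward path problem and the optimal backward path problem. By the result of Kao et al.~\cite{kao10}, once $d_s$ and $d_t$ are known, the optimal outward path can be computed in $O(n\log n+m)$ time. By Theorem~\ref{thm:back} of this paper, once $d_s$ and $d_t$ are known the optimal backward path can be computed in linear time $O(n+m)$. Taking the shorter of the two output paths yields a next-to-shortest $st$-path.

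Summing the time bounds, the total running time is $O(n\log n+m)+O(n\log n+m)+O(n+m)=O(n\log n+m)$, which establishes the corollary. There is no real obstacle here — the corollary is essentially a bookkeeping combination — but the one thing worth verifying carefully is that both subroutines take $d_s$ and $d_t$ (equivalently, $D^+$) as their only nontrivial input, so the single SSSP preprocessing suffices to feed both; this is explicit in Theorem~\ref{thm:back} and is the content of the outward-path result cited from \cite{kao10}.
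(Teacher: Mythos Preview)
Your proposal is correct and matches the paper's own argument essentially line for line: Dijkstra with Fibonacci heaps for the distances $d_s,d_t$ \cite{dijk}, Theorem~\ref{thm:back} for the backward subproblem, and the $O(n\log n+m)$ outward-path result of \cite{kao10}, combined via Lemma~\ref{c1}. The paper states exactly this combination in the sentence preceding the corollary.
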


\section{Optimal outward path}

In this section we show an efficient algorithm for finding an optimal outward path. As described in the introduction, an optimal outward path contains exactly one outward subpath and has no backward subpath, in which an outward subpath is a path $P$ such that $E(P)\subset E-E(D)$ and both endpoints of $P$ are in $V(D)$.
An outward path must be strictly longer than a shortest path between its endpoints. Otherwise it should be entirely in $D$. Therefore the length of an outward $st$-path must be strictly larger than $d(s,t)$. Our goal is to find an minimum length $st$-path with an outward subpath. 

Let $T$ be any shortest-path tree of $G$ rooted at $s$ and $R=T-E(D)$ denote the graph obtained by removing edges in $E(T)\cap E(D)$ from $T$.
Apparently $R$ is a forest consisting of subtrees of $T$ and $V(R)=V(T)=V$.
By the definition of $D$, any shortest path between two vertices in $D$ must be included in $E(D)$. For any $v\in V(D)$, the path from $s$ to $v$ on $T$ must be entirely within $E(D)$ and therefore $v$ must be a root of a subtree of $R$.
Furthermore, the root of any subtree of $R$ must be in $V(D)$ because the edge between it and its parent is removed and we only remove edges between two vertices in $V(D)$.
Let $\widetilde{E}$ denote the set of edges $(x,y)$ such that $(x,y)\in E-E(T)\cup E(D)$ and $x$ and $y$ are in different subtrees of $R$. 
We show the next lemma.

\begin{lemma}\label{out1}
An optimal outward path $P$ contains one edge in $\widetilde{E}$.
\end{lemma}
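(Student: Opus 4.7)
The plan is to isolate the unique outward subpath of $P$—guaranteed by Lemma~\ref{c1}—and find a single edge on it that witnesses membership in $\widetilde{E}$. Let $P_o$ be that outward subpath and let $u,v$ denote its endpoints, both in $V(D)$. Since every edge of $P_o$ lies in $E-E(D)$, we have $|E(P_o)|\geq 1$ and hence $u\neq v$. From the paragraph preceding the lemma, every vertex of $V(D)$ is the root of its own subtree of $R$, and distinct vertices of $V(D)$ are roots of distinct subtrees; consequently $u$ and $v$ lie in different subtrees of $R$.

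I would then traverse $P_o$ from $u$ toward $v$ and take $(x,y)$ to be the first edge along $P_o$ whose two endpoints belong to different subtrees of $R$. Such an edge must exist because the walk starts in the subtree rooted at $u$ and ends in the (different) subtree rooted at $v$, so the partition into subtrees must be crossed at some step. By construction $(x,y)\in E(P_o)\subseteq E-E(D)$, so to conclude $(x,y)\in\widetilde{E}$ it only remains to verify that $(x,y)\notin E(T)$.

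Suppose for contradiction that $(x,y)\in E(T)$. The only edges of $T$ deleted when forming $R$ are those in $E(T)\cap E(D)$, and we already know $(x,y)\notin E(D)$; consequently $(x,y)\in E(R)$. But then $x$ and $y$ are joined by an edge of $R$, so they must lie in the same subtree of $R$, contradicting our choice of $(x,y)$. Hence $(x,y)\in\widetilde{E}$, proving the lemma.

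No substantive obstacle is expected. The only point that demands care is cleanly separating the three disjoint reasons an edge of $E$ can fail to be in $\widetilde{E}$—being in $E(D)$, being in $E(T)\setminus E(D)$, or having both endpoints in the same subtree of $R$—and ruling out each one for the chosen crossing edge; once those are unbundled, each exclusion is immediate from the definitions of $R$ and $\widetilde{E}$.
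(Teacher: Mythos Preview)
Your proof is correct and follows the same approach as the paper's: the outward subpath has its two endpoints in $V(D)$, hence in different subtrees of $R$, so some edge along it must cross between subtrees and therefore lie in $\widetilde{E}$. The paper's own proof is a two-line sketch that asserts this crossing without further justification; you have simply made explicit the step it elides---namely, that the crossing edge cannot belong to $E(T)$ (since any tree edge outside $E(D)$ survives in $R$ and would keep its endpoints in the same subtree).
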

\begin{proof}
By definition $P$ contains an outward subpath. Since the both endpoints of this outward subpath are in $V(D)$, they must be in different subtrees of $R$, and $P$ must have an edge in $\widetilde{E}$.
\qed 
\end{proof}

Define 
\begin{eqnarray} 
f(x,y)=\left\{\begin{array}{ll}
d_s(x)+w(x,y)+d_t(y)\;& \mbox{if }(x,y)\in \widetilde{E}\\
\infty&\mbox{otherwise}
\end{array}
\right.
\end{eqnarray}
Note that, since $G$ is undirected, both $(x,y)$ and $(y,x)$ denote the same edge. But $f(x,y)\neq f(y,x)$ in general. The following lemma is crucial for our result. 

\begin{figure}[t]
\begin{center}
\epsfbox{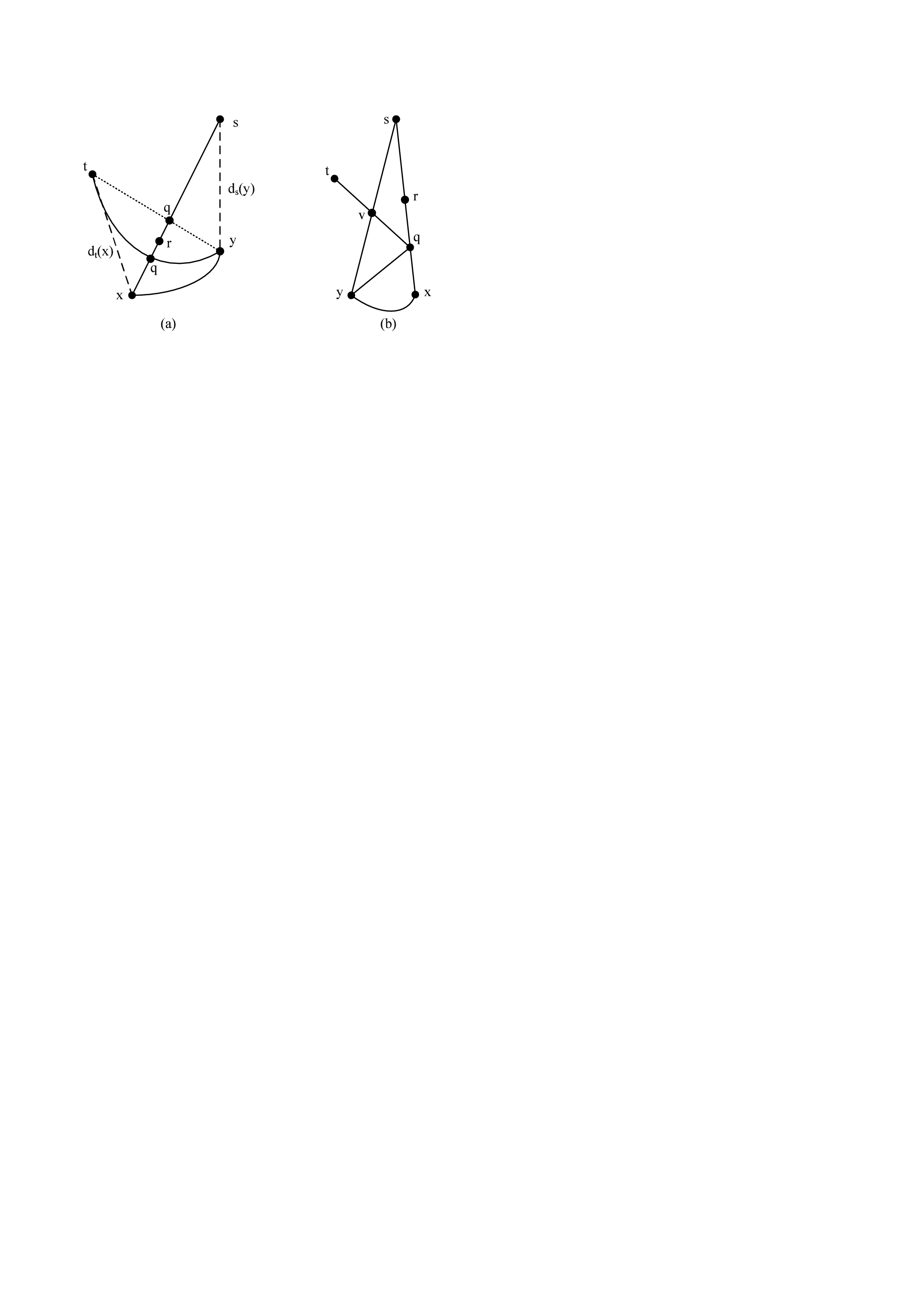}
\caption{The existence of a simple path with length $f(x,y)$. (a): Two cases of $P_2$; the dot line is impossible. (b): The case of $P_3$ intersecting $P_2[q,t]$.}
\label{outf}
\end{center}
\end{figure}

\begin{lemma}\label{outopt}
If $(x,y)$ minimizes function $f$ and $f(x,y)\neq \infty$, then there exists a simple $st$-path of length $f(x,y)$ and with one edge in $\widetilde{E}$. Such a path is an optimal outward path.  
\end{lemma}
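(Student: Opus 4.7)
The plan is to construct the candidate path explicitly and then verify, in order, simplicity and optimality. For the construction, let $P_1$ denote the unique path from $s$ to $x$ in the shortest-path tree $T$; it is simple and has length $d_s(x)$. Let $P_2$ denote any simple shortest $yt$-path in $G$, of length $d_t(y)$. Form $P := P_1 \circ (x,y) \circ P_2$. Its total length is $d_s(x) + w(x,y) + d_t(y) = f(x,y)$, and it contains the edge $(x,y) \in \widetilde{E} \subseteq E - E(D)$, so as soon as simplicity is established it qualifies as an outward $st$-path.

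The main task is to show that $P$ is in fact simple. I plan to argue by contradiction: suppose there is a vertex $q \in V(P_1) \cap V(P_2)$ with $q \notin \{x,y\}$. Choose such a $q$ extremally, say the last one encountered along $P_2$ while walking from $y$ toward $t$. The shortest-path identities $d_s(x) = d_s(q) + w(P_1[q,x])$ and $d_t(y) = w(P_2[y,q]) + d_t(q)$ together give
\[ f(x,y) = d_s(q) + d_t(q) + \bigl(w(P_1[q,x]) + w(x,y) + w(P_2[y,q])\bigr), \]
where the parenthesized quantity is the strictly positive length of the closed walk $q \to x \to y \to q$. I then case-analyze on the position of $q$ --- in particular on whether $q \in V(D)$ and on which $R$-subtree contains $q$ --- and in each case exhibit an edge $(x',y') \in \widetilde{E}$ with $f(x',y') < f(x,y)$, contradicting the minimality of $(x,y)$. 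When $q \notin V(D)$, the shortcut walk $P_1[s,q] \circ P_2[q,t]$ is itself a strictly shorter outward $st$-walk, and a simple subpath inside it supplies the witnessing edge via Lemma~\ref{out1}. The delicate case is $q \in V(D)$, corresponding to the dotted, impossible configuration in Figure~\ref{outf}(a): here the obvious shortcut is merely a shortest $st$-path and carries no outward edge. I will handle it by introducing an auxiliary path $P_3$ as sketched in Figure~\ref{outf}(b) --- for instance a shortest $sy$-path, or symmetrically a shortest $xt$-path --- and tracking its first point of contact with $P_2[q,t]$ to locate an edge in $\widetilde{E}$ whose $f$-value beats $f(x,y)$.

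Once simplicity is secured, $P$ is a simple $st$-path of length $f(x,y)$ containing the outward edge $(x,y)$, hence an outward $st$-path. Optimality is then immediate from Lemma~\ref{out1}: for any outward $st$-path $P^*$, the lemma supplies an edge $\{a,b\} \in \widetilde{E}$; if $P^*$ traverses it from $a$ to $b$, then the prefix has length at least $d_s(a)$ and the suffix at least $d_t(b)$, so $w(P^*) \geq d_s(a) + w(a,b) + d_t(b) = f(a,b) \geq f(x,y) = w(P)$, and $P$ is optimal. The main obstacle I anticipate is the simplicity case analysis, specifically the $q \in V(D)$ sub-case, where the straightforward shortcut fails to be outward and a careful choice of the auxiliary path $P_3$ together with bookkeeping on $R$-subtree membership is required to extract the desired $\widetilde{E}$-edge that witnesses non-minimality of $f(x,y)$.
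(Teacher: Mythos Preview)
Your plan diverges from the paper's proof in a way that creates a genuine gap. You aim to show that the specific path $P=P_1\circ(x,y)\circ P_2$ is simple, arguing by contradiction that any intersection point $q$ would yield an edge $(x',y')\in\widetilde{E}$ with $f(x',y')<f(x,y)$. The paper does \emph{not} prove that $P$ is simple; instead, when $P_1$ and $P_2$ intersect, it explicitly constructs a \emph{different} simple $st$-path of length $f(x,y)$, sometimes traversing the edge in the reverse direction as $(y,x)$ and using the tree path $P_3$ to $y$ as the prefix.

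The reason your contradiction cannot close is that strict inequality is unavailable in the key case. The paper's first computation shows only $f(y,x)\le f(x,y)$, with equality exactly when $d_t(x)=w(P_1[x,q])+w(P_2[q,t])$; since $(x,y)$ is a minimizer, equality is forced, not contradicted. Concretely, when $q$ lies on $P_1[r,x]$ (where $r$ is the root of $x$'s $R$-subtree), the paths $P_1$ and $P_2$ may genuinely intersect while $(x,y)$ remains optimal, and there need be no edge of $\widetilde{E}$ with strictly smaller $f$-value. Your handling of the case $q\notin V(D)$ also leans on Lemma~\ref{out1} applied to a walk, which the lemma does not cover; a direct argument is needed to locate a $\widetilde{E}$-edge on $P_2[q,t]$. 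More importantly, your dichotomy $q\in V(D)$ versus $q\notin V(D)$ is not the right one: the paper's decisive split is whether $q$ is a \emph{strict ancestor} of $r$ on $T$ (then $f(y,x)<f(x,y)$, a true contradiction) or $q\in P_1[r,x]$ (then one must \emph{construct} an alternative simple path, either $P_3\circ(y,x)\circ P_1[x,q]\circ P_2[q,t]$ or $P_1\circ(x,y)\circ P_3[y,v]\circ P_2[v,t]$, depending on whether $P_3$ meets $P_2[q,t]$). To repair your argument you would have to abandon the goal of proving $P$ itself simple and instead build such an alternative witness.
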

\begin{proof}
We shall show the existence of such a simple path, and then it is an optimal outward path by Lemma~\ref{out1}.

Let $P_1$ be the shortest path from $s$ to $x$ on $T$ and $P_2$ any shortest path from $y$ to $t$. If $P_1$ and $P_2$ are disjoint, $P_1\circ (x,y)\circ P_2$ is a desired path since $(x,y)$ is an edge in $\widetilde{E}$.
Otherwise let $q\in V(P_1)\cap V(P_2)$.
By the triangle inequalities, seeing Figure~\ref{outf}.(a), we have 
\begin{eqnarray*}
f(y,x)&=&d_s(y)+w(y,x)+d_t(x)\\
&\leq&w(P_1[s,q]\circ P_2[q,y])+w(x,y)+w(P_1[x,q]\circ P_2[q,t])\\
&=&w(P_1)+w(x,y)+w(P_2)=f(x,y) 
\end{eqnarray*}
and the equality holds when 
\begin{eqnarray}
d_s(y)=w(P_1[s,q]\circ P_2[q,y])\label{outeq1}
\end{eqnarray}
and 
\begin{eqnarray}
d_t(x)=w(P_1[x,q]\circ P_2[q,t]) \label{outeq2}
\end{eqnarray}

Let $r$ be the root of the subtree which $x$ belongs to. If $q$ is an internal node of $P_1[s,r]$, i.e., $q$ is a ancestor of $r$ on $T$, since $r\in V(D)$, we have 
\begin{eqnarray*}
d_t(x)&\leq& w(P_1[x,r])+d_t(r)\\
&=&w(P_1[x,r])+d_t(q)-d(q,r)\\
&=&w(P_1[x,q]\circ P_2[q,t])-2d(q,r)\\
&<&w(P_1[x,q]\circ P_2[q,t]). 
\end{eqnarray*}
That is, the equality of Eq. (\ref{outeq2}) cannot hold and we have $f(y,x)<f(x,y)$, which contradicts to the minimality of $f(x,y)$.
Therefore $q$ is on $P_1[r,x]$.
In the following we assume that $q$ is the last vertex of $P_2$ which is also in $P_1$ if $P_2$ intersects $P_1$ at more than one vertices, i.e., $P_2[q,t]$ intersects $P_1$ only at $q$.
Let $P_3$ be the path from $s$ to $y$ on $T$. 
Since $x$ and $y$ are at different subtrees of $R$, the path $P_3$ is disjoint with the path $P_1[r,x]$. 
If $P_3$ is disjoint to $P_2[q,t]$, the path $P_3\circ (y,x)\circ P_1[x,q]\circ P_2[q,t]$ is a simple path and its length is $f(y,x)$ and also equals to $f(x,y)$ by Eq.~(\ref{outeq2}), seeing Figure~\ref{outf}.(a).
Otherwise, let $v$ be the vertex in $V(P_3)\cap V(P_2[q,t])$ and closest to $y$.
We shall show that $P_1$ is disjoint to $P_3[y,v]$, and then $P_1\circ (x,y)\circ P_3[y,v]\circ P_2[v,t]$ is a simple path (Figure~\ref{outf}.(b)) and its length is 
\[ d_s(x)+w(x,y)+w(P_3[y,v])+w(P_2[v,t]), \]
which is at most $f(x,y)$ since $w(P_3[y,v])=d(y,v)\leq w(P_2[y,v])$.

To see $P_1$ must be disjoint to $P_3[y,v]$, we first note that $P_3[y,v]$ is disjoint to $P_1[r,x]$ or otherwise $y$ and $x$ would be in the same subtree of $R$.
Next, similar to that $q$ cannot be an ancestor of $r$, since $v$ is also a vertex on $P_2$,  we have that $v$ cannot be an ancestor of $r$ on $T$. That is, $v$ must be below the lowest common ancestor of $x$ and $y$ on $T$, and therefore $P_3[y,v]$ is disjoint to $P_1$.  
\qed  
\end{proof}

\begin{theorem}\label{thm:out}
For an undirected graph with positive edge lengths, the optimal outward path problem can be solved in $O(m+n)$ time if $d_s(v)$ and $d_t(v)$ are given for all $v$.  
\end{theorem}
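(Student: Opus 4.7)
The plan is to turn Lemma~\ref{outopt} into a constructive linear-time algorithm. By Lemmas~\ref{out1} and~\ref{outopt}, it suffices to (a) find the edge $(x^*,y^*)\in\widetilde{E}$ that minimizes $f$, and (b) reconstruct a simple $st$-path of length $f(x^*,y^*)$ containing $(x^*,y^*)$; such a path is necessarily an optimal outward path.

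For step (a), I would first build in linear time the following auxiliary objects, using only $d_s$ and $d_t$: the set $V(D)=\{v:d_s(v)+d_t(v)=d(s,t)\}$ together with the edge set $E(D)$; a shortest-path tree $T$ rooted at $s$, obtained by choosing, for each $v\neq s$, any parent $u$ with $d_s(v)=d_s(u)+w(u,v)$; the forest $R=T-E(D)$ together with a subtree-root label $\rho(v)$ for every vertex $v$, computed by a single DFS on $R$; and a shortest-path tree $T_t$ oriented toward $t$, built analogously from $d_t$. With these tables, a single scan of $E$ decides membership in $\widetilde{E}$ by the constant-time test $(x,y)\notin E(D)$ and $\rho(x)\neq\rho(y)$, and evaluates $f(x,y)=d_s(x)+w(x,y)+d_t(y)$; the minimizer $(x^*,y^*)$ and the value $f(x^*,y^*)$ are thus obtained in $O(m)$ time.

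For step (b), I would mimic the case analysis in the proof of Lemma~\ref{outopt}. Let $P_1$ be the $sx^*$-path in $T$ and $P_2$ the $y^*t$-path in $T_t$. If $V(P_1)\cap V(P_2)=\emptyset$, output $P_1\circ(x^*,y^*)\circ P_2$. Otherwise identify the last vertex $q$ of $P_2$ that lies on $P_1$, and let $P_3$ be the $sy^*$-path in $T$; if $P_3$ is disjoint from $P_2[q,t]$, output $P_3\circ(y^*,x^*)\circ P_1[x^*,q]\circ P_2[q,t]$; otherwise let $v$ be the vertex of $P_3$ closest to $y^*$ that also lies on $P_2[q,t]$, and output $P_1\circ(x^*,y^*)\circ P_3[y^*,v]\circ P_2[v,t]$. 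Lemma~\ref{outopt} guarantees that the emitted walk is a simple $st$-path of length $f(x^*,y^*)$.

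The main obstacle is making step (b) run in linear time, since a naive intersection search could cost $\Theta(n^2)$. I would resolve this by first walking $P_2$ once and marking its vertices in a Boolean array of size $n$; then walking $P_1$ from $x^*$ toward $s$ produces $q$ the first time a marked vertex is met, and similarly walking $P_3$ from $y^*$ yields $v$. Each of $P_1,P_2,P_3$ is traversed a constant number of times, giving $O(n)$ reconstruction. Combined with the $O(n+m)$ preprocessing and the $O(m)$ scan over edges, and invoking Lemma~\ref{outopt} for correctness, this establishes the claimed $O(m+n)$ bound.
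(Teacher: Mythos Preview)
Your argument follows essentially the same route as the paper's: invoke Lemma~\ref{outopt}, build $D$, $T$, the forest $R$ with its root labels, and $\widetilde{E}$ in linear time, scan $\widetilde{E}$ to minimize $f$, and then reconstruct the witnessing path by replaying the case analysis of Lemma~\ref{outopt}. The paper's own proof is actually terser than yours on the reconstruction step (it simply asserts linear time), so your marking-array implementation is a welcome elaboration.

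One small implementation slip to fix: in Lemma~\ref{outopt} the vertex $q$ is the \emph{last} vertex of $P_2$ that lies on $P_1$, precisely so that $P_2[q,t]$ meets $P_1$ only at $q$. Your rule ``walk $P_1$ from $x^*$ toward $s$ and stop at the first $P_2$-marked vertex'' returns the intersection closest to $x^*$ on $P_1$; by Eq.~(\ref{outeq2}) applied to two intersection points this is the one with the larger $d_t$, i.e.\ the \emph{first} along $P_2$, not the last. Similarly, $v$ must be taken in $V(P_3)\cap V(P_2[q,t])$, not in $V(P_3)\cap V(P_2)$. Both are one-line repairs (e.g.\ mark $P_1$ and scan $P_2$ backward from $t$ to find $q$; then clear marks on $P_2[y^*,q)$ before scanning $P_3$ from $y^*$) and leave the $O(m+n)$ bound intact.
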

\begin{proof}
By Lemma~\ref{outopt}, the length of the optimal outward $st$-path is the minimum value of function $f$. To compute $(x,y)$ minimizing $f$, we first construct $D$ and a shortest path tree $T$, and then find the edge set $\widetilde{E}$. The minimum can be found by checking both $f(x,y)$ and $f(y,x)$ for all edges $(x,y)\in \widetilde{E}$. The time complexity is linear if the distances $d_s(v)$ and $d_t(v)$ for all $v$ are given.
Once $(x,y)$ is found, by the method in the proof of Lemma~\ref{outopt}, the corresponding path can be constructed in linear time.
\qed\end{proof}

By Theorems~\ref{thm:back} and \ref{thm:out}, we have the next result.
\begin{corollary}
For undirected graphs with positive edge lengths, if the single source shortest path problem can be solved in $O(t(m,n))$ time, the next-to-shortest path problem can be solved in $O(t(m,n)+m+n)$ time.
\end{corollary}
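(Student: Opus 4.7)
The plan is to observe that this corollary is essentially an assembly step: the heavy lifting has already been done in Theorems~\ref{thm:back} and~\ref{thm:out}, which both state linear-time bounds \emph{given the distances} $d_s(v)$ and $d_t(v)$ for all $v$. So the proof reduces to (i) obtaining these distances using SSSP twice, and (ii) combining the two subproblem solutions into a next-to-shortest path via Lemma~\ref{c1}.

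First I would compute $d_s(v)$ for all $v$ by one invocation of SSSP with source $s$, and $d_t(v)$ for all $v$ by one invocation of SSSP with source $t$ (using the fact that $G$ is undirected, so reversing edges is unnecessary). Each call costs $O(t(m,n))$, giving $O(t(m,n))$ in total for the distance arrays. Next I would apply Theorem~\ref{thm:back} to obtain the length (and, via the constructive proof of Lemma~\ref{suff}, a corresponding path) of the optimal backward path in $O(m+n)$ time, and apply Theorem~\ref{thm:out} to obtain the optimal outward path in $O(m+n)$ time.

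Finally, by Lemma~\ref{c1}, every candidate next-to-shortest $st$-path is either an outward path (containing at least one edge outside $E(D)$) or a backward path (using only edges of $D$ but with length strictly larger than $d(s,t)$). Hence the shorter of the two computed paths is a next-to-shortest $st$-path, provided at least one exists. Adding up the costs yields the claimed bound $O(t(m,n)+m+n)$.

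There is no serious obstacle here; the only mild subtlety is confirming that the outward/backward dichotomy of Lemma~\ref{c1} is exhaustive and that both subproblems can legitimately share the precomputed distance arrays, which is immediate since both Theorems~\ref{thm:back} and~\ref{thm:out} are phrased with exactly this precondition.
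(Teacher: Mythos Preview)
Your proposal is correct and matches the paper's approach exactly: the paper gives no explicit proof of this corollary, merely stating it as an immediate consequence of Theorems~\ref{thm:back} and~\ref{thm:out}, which is precisely the assembly you describe. The only minor quibble is that the outward/backward dichotomy you attribute to Lemma~\ref{c1} is actually stated in the paragraph preceding it (Lemma~\ref{c1} concerns the structure of each optimal subproblem solution), but this does not affect the validity of your argument.
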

Important graph classes for which the single source shortest path problem can be solved in linear time include unweighted graphs (by BFS \cite{cor01}), planar graphs \cite{hen97}, and integral edge length graphs \cite{thr99}.

\section{Concluding remarks}

It is easy to show that the next-to-shortest path problem is at least as hard as the shortest path problem.
Given an instance of the shortest path problem, we add a dummy edge between $s$ and $t$ with sufficient small weight.
Then if there is an algorithm for the next-to-shortest path problem, we can solve the shortest path problem with the same time complexity since the above reduction is linear time.

In this paper, we show that the next-to-shortest path problem can be solved with the same time complexity as the single source shortest paths problem. 
Interesting future works include the directed version and the undirected case with nonnegative edge weights. 
Allowing zero-length edges makes the next-to-shortest path problem more involved. 
Particularly, Lemma~\ref{suff} no more holds, and there seems no obvious modification to generalize the lemma to the nonnegative case.  
Another open problem is if we can find the single source all destinations next-to-shortest paths in the same time complexity.

\begin{acknowledgements}
This work was supported in part by 
NSC 97-2221-E-194-064-MY3 and NSC 98-2221-E-194-027-MY3 from the National Science Council, Taiwan.
\end{acknowledgements}

\end{document}